\providecommand{\U}[1]{\protect\rule{.1in}{.1in}}
\newtheorem{theorem}{Theorem}
\newtheorem{conjecture}[theorem]{Conjecture}
\newtheorem{corollary}[theorem]{Corollary}
\newtheorem{example}[theorem]{Example}
\newtheorem{proposition}[theorem]{Proposition}
\newenvironment{proof}[1][Proof]{\noindent\textbf{#1.} }{\ \rule{0.5em}{0.5em}}
\begin{document}

\title{On metric of quantum channel spaces}
\author{Keiji Matsumoto\\National Institute of Informatics, Tokyo, Japan}
\maketitle

\begin{abstract}
So far, there have been plenty of literatures on the metric in the space of
probability distributions and quantum states. As for channels, however, only a
little had been known. In this paper, we impose monotonicity by concatenation
of channels before and after the given channel families, and invariance by
tensoring identity channels. Under these axioms, we identify the largest and
the smallest metrics. Also, we studied asymptotic theory of metric in parallel
and adaptive repetition settings, and applied them to the study of channel
estimation. First we express the achievable lower bound of the mean square
error (MSE) of an \ estimate by a monotone channel metric, and show this
equals $O\left(  1/n\right)  $ for noisy channels, \ where $n$ is the number
of times of channel use. This result shows Heisenberg rate, or $\,O\left(
1/n^{2}\right)  $-rate of the MSE observed in case of estimation of unitary,
collapses with very small arbitrary noise.

\end{abstract}

\section{Introduction}

The aim of the manuscript is to characterize monotone (not necessarily
Riemannian) metric in the space of quantum channels, or CPTP maps, and
application of the theory to channel estimation problem.

So far, there have been plenty of literatures on the metric in the space of
probability distributions and quantum states. Cencov, sometime in 1970s,
proved the monotone metric in probability distribution space is unique up to
constant multiple, and identical to Fisher information metric \cite{Cencov}.
He also discussed invariant connections in the same space. Amari and others
independently worked on the same objects, especially from differential
geometrical view points, and applied to number of problems in mathematical
statistics, learning theory, time series analysis, dynamic systems, control
theory, and so on\cite{Amari}\cite{AmariNagaoka}. Quantum mechanical states
are discussed in literatures such as \cite{AmariNagaoka}\cite{Matsumoto}%
\cite{Matsumoto}\cite{Petz}. Among them Petz\thinspace\cite{Petz}
characterized all the monotone metrics in the quantum state space using
operator mean.

As for channels, however, only a little had been known. To my knowledge, there
had been no study about axiomatic characterization of distance measures in the
classical or quantum channel space.

In this paper, we impose monotonicity by concatenation of channels before and
after the given channel families, and invariance by tensoring identity
channels. Notably, we do \textit{not }suppose a metric is Riemannian, since,
as was shown in , this assumption is not compatible with other assumptions.

Under these axioms, we identify the largest and the smallest metrics. Also, we
studied asymptotic theory of metric in parallel and adaptive repetition
settings, and applied them to the study of channel estimation. First we
express the achievable lower bound of the mean square error (MSE) of an
\ estimate by a monotone channel metric, and show this equals $O\left(
1/n\right)  $ for noisy channels, \ where $n$ is the number of times of
channel use. This result shows Heisenberg rate, or $\,O\left(  1/n^{2}\right)
$-rate of the MSE observed in case of estimation of unitary, collapses with
very small arbitrary noise.

\section{Notations and conventions}

\begin{itemize}
\item $\mathcal{H}_{\mathrm{in}}$ ($\mathcal{H}_{\mathrm{out}}$) :the Hilbert
space for the input (output)

\item $\mathcal{S}_{\mathrm{in}}$ ($\mathcal{S}_{\mathrm{out}}$) : the
totality of the quantum states living in $\mathcal{H}_{\mathrm{in}}$
($\mathcal{H}_{\mathrm{out}}$). In this paper, the existence of density
operator is always assumed. Hence, $\mathcal{S}_{\mathrm{in}}$ ($\mathcal{S}%
_{\mathrm{out}}$) is equivalent to the totality of density operators.

\item $\mathcal{S}\left(  \mathcal{H}\right)  $ : the totality of the quantum
states living in $\mathcal{H}$.

\item $\mathcal{QC}$: the totality of channels which sends an element of
$\mathcal{S}_{\mathrm{in}}$ to an element of $\mathcal{S}_{\mathrm{out}}$

\item $\mathcal{QC}\left(  \mathcal{S}_{1},\mathcal{S}_{2}\right)  $ : the
totality of channels which sends an element of $\mathcal{S}_{1}$ to an element
of $\mathcal{S}_{2}$. Abbreviated form $\mathcal{QC}$ indicates that $\left(
\mathcal{S}_{1},\mathcal{S}_{2}\right)  =\left(  \mathcal{S}_{\mathrm{in}%
},\mathcal{S}_{\mathrm{out}}\right)  $. Also, $\mathcal{QC}\left(
\mathcal{S}\left(  \mathcal{H}_{1}\right)  ,\mathcal{S}\left(  \mathcal{H}%
_{2}\right)  \right)  $ is abbreviated as $\mathcal{QC}\left(  \mathcal{H}%
_{1},\mathcal{H}_{2}\right)  $. Also, $\mathcal{QC}\left(  \mathcal{S}\right)
$ and $\mathcal{QC}\left(  \mathcal{H}\right)  $ means $\mathcal{QC}\left(
\mathcal{S},\mathcal{S}\right)  $ and $\mathcal{QC}\left(  \mathcal{H}%
,\mathcal{H}\right)  $, respectively.

\item A quantum state $\rho$ is identified with the channel which sends all
the input states to $\rho$.

\item $\mathcal{T}_{\cdot}\left(  \cdot\right)  $: tangent space

\item $\delta$ etc. : an element of $\mathcal{T}_{\rho}\left(  \mathcal{S}%
_{\mathrm{in}}\right)  $ , $\mathcal{T}_{p}\left(  \mathcal{P}_{\Omega
}\right)  $ , etc.

\item $\Delta$ etc. : an element of $\mathcal{T}_{\Phi}\left(  QC\right)  $

\item An element $\delta$ of $\mathcal{T}_{\rho}\left(  \mathcal{S}\right)  $
etc. is identified with an element of $\tau c\left(  \mathcal{S}_{\mathrm{in}%
}\right)  $ such that $\mathrm{tr}\delta=0$.

\item $g_{\rho}\left(  \delta\right)  $: \ square of a norm in $\mathcal{T}%
_{\rho}\left(  \mathcal{S}\right)  $

\item $h_{\rho}\left(  \delta\right)  $: square of a norm in $\mathcal{T}%
_{p}\left(  \mathcal{P}\right)  $

\item $G_{\Phi}\left(  \Delta\right)  $: square of a norm in $\mathcal{T}%
_{\Phi}(\mathcal{QC})$

\item $J_{p}\left(  \delta\right)  $ : classical Fisher information

\item $J_{p}^{S}\left(  \delta\right)  $ : SLD Fisher information. $J_{p}%
^{S}\left(  \delta\right)  :=\mathrm{tr}\,\rho\left(  L_{\rho}^{S}\right)
^{2}$, where $L_{\rho}^{S}$ is symmetric logarithmic derivative (SLD), or

the solution to the equation $\delta=\frac{1}{2}\left(  L_{\rho}^{S}\rho+\rho
L_{\rho}^{S}\right)  $.

\item $J_{p}^{S}\left(  \delta\right)  $ : RLD Fisher information. $J_{p}%
^{S}\left(  \delta\right)  :=\Re\mathrm{tr}\,\rho\left(  L_{\rho}^{R}\right)
^{\dagger}L_{\rho}^{R}$, where $L_{\rho}^{R}$ is right logarithmic derivative
(RLD), or

the solution to the equation $\delta=L_{\rho}^{R}\rho$.

\item The local data at $p$, etc.: the pair $\left\{  p,\delta\right\}  $, etc.

\item $\Phi\left(  \cdot|x\right)  \in\mathcal{P}_{\mathrm{out}}$ : the
distribution of the output alphabet when the input is $x$

\item $\Delta\left(  \cdot|x\right)  \in\mathcal{T}_{p}\left(  \mathcal{P}%
_{\mathrm{out}}\right)  $ is defined as the infinitesimal increment of above

\item $\mathbf{I}$: identity

\item $\delta_{e}:=\frac{\delta}{p}$

\item $\delta^{\left(  n\right)  }:=\delta\otimes p^{\otimes n-1}%
+p\otimes\delta\otimes p^{\otimes n-2}+\cdots+p^{\otimes n-1}\otimes\delta
\in\mathcal{T}_{p}\left(  \mathcal{P}^{\otimes n}\right)  $

\item $\delta_{e}^{\left(  n\right)  }:=\delta_{e}\otimes1^{\otimes
n-1}+1\otimes\delta_{e}\otimes1^{\otimes n-2}+\cdots+1^{\otimes n-1}%
\otimes\delta_{e}$

\item $\Delta^{\left(  n\right)  }:=\Delta\otimes\Phi^{\otimes n-1}%
+\Phi\otimes\Delta\otimes\Phi^{\otimes n-2}+\cdots+\Phi^{\otimes n-1}%
\otimes\Delta\in\mathcal{T}_{\Phi}\left(  \mathcal{C}^{\otimes n}\right)  $

\item $\mathrm{N}\left(  a,\sigma^{2}\right)  $ : Gaussian with mean $a$ and
the variance $\sigma^{2}$.

\item $\delta\mathrm{N}\left(  a,\sigma^{2}\right)  $ is singed measure
defined by $\delta\mathrm{N}\left(  a,\sigma^{2}\right)  \left(  B\right)
=\frac{1}{\sqrt{2\pi}\sigma}\int_{B}\frac{x-a}{\sigma^{2}}\exp\left[
-\frac{1}{2\sigma^{2}}\left(  x-a\right)  ^{2}\right]  \mathrm{d}x$. This
corresponds to the tangent vector of Gaussian shift family with the variance
$\sigma^{2}$.
\end{itemize}

\section{Single-copy theory}

\subsection{Axioms}

\begin{description}
\item[(M)] $G_{\Phi}\left(  \Delta\right)  \geq G_{\Phi\circ\Psi}\left(
\Delta\circ\Psi\right)  $, $G_{\Phi}\left(  \Delta\right)  \geq G_{\Psi
\circ\Phi}\left(  \Psi\circ\Delta\right)  $

\item[(E)] $G_{\Phi\otimes\mathbf{I}}\left(  \Delta\otimes\mathbf{I}\right)
=G_{\Phi}\left(  \Delta\right)  $

\item[(N)] $G_{p}\left(  \delta\right)  =J_{p}\left(  \delta\right)  $
\end{description}

\subsection{Estimation of channel and $G^{\min}$}

Consider estimation of an unknown channel, which is drawn from the family
$\left\{  \Phi_{\theta}\right\}  _{\theta\in%
\mathbb{R}
}$, where $\theta\in%
\mathbb{R}
$ is unknown scalar parameter. The asymptotic mean-square error of probability
distribution and quantum state is inversely proportional to Fisher information
$J_{p}\left(  \delta\right)  $ and SLD Fisher information $J_{\rho}^{S}\left(
\delta\right)  $, respectively. Hence, it is natural to consider
\[
G_{\Phi}^{\min}\left(  \Delta\right)  :=\sup_{\substack{\rho\in\mathcal{S}%
\left(  \mathcal{H}_{\mathrm{n}}\otimes\mathcal{K}\right)  \\M\in
\mathcal{M}_{\mathrm{out}}}}J_{M\circ\left(  \Phi\otimes\mathbf{I}\right)
\left(  \rho\right)  }\left(  M\circ\left(  \Delta\otimes\mathbf{I}\right)
\left(  \rho\right)  \right)  =\sup_{\rho\in\mathcal{S}\left(  \mathcal{H}%
_{\mathrm{n}}\otimes\mathcal{K}\right)  }J_{\left(  \Phi\otimes\mathbf{I}%
\right)  \left(  \rho\right)  }^{S}\left(  \left(  \Delta\otimes
\mathbf{I}\right)  \left(  \rho\right)  \right)  ,
\]
where the identity is due to characterization of SLD Fisher information in
\cite{Nagaoka}:

\begin{theorem}
\cite{Nagaoka}
\[
J_{\rho}^{S}\left(  \delta\right)  =\sup_{M}J_{M\left(  \rho\right)  }\left(
M\left(  \delta\right)  \right)  .
\]

\end{theorem}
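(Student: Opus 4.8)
The plan is to establish the two inequalities separately: for every measurement $M$ the classical Fisher information $J_{M(\rho)}(M(\delta))$ is at most $J_\rho^S(\delta)$, and this bound is attained by a particular $M$, so the supremum is in fact a maximum. Throughout write $L=L_\rho^S$ for the SLD, so that $\delta=\frac{1}{2}(L\rho+\rho L)$ and, by cyclicity of the trace, $J_\rho^S(\delta)=\mathrm{tr}\,\rho L^2$.

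First I would fix a POVM $M=\{M_x\}$ with $M_x\geq 0$ and $\sum_x M_x=\mathbf{I}$, and record the induced local data $p(x)=\mathrm{tr}\,\rho M_x$ and $\Delta(x)=\mathrm{tr}\,\delta M_x$. Since $M_x,\rho,L$ are Hermitian, substituting $\delta=\frac{1}{2}(L\rho+\rho L)$ gives $\Delta(x)=\Re\,\mathrm{tr}\,\rho L M_x$. The key step is a Cauchy--Schwarz estimate in the Hilbert--Schmidt inner product: with $A=M_x^{1/2}\rho^{1/2}$ and $B=M_x^{1/2}L\rho^{1/2}$ one has $\mathrm{tr}\,A^{\dagger}A=p(x)$, $\mathrm{tr}\,B^{\dagger}B=\mathrm{tr}\,M_x L\rho L$, and $|\mathrm{tr}\,A^{\dagger}B|=|\mathrm{tr}\,\rho L M_x|$, whence
\[
\Delta(x)^2\leq|\mathrm{tr}\,\rho L M_x|^2\leq p(x)\,\mathrm{tr}\,M_x L\rho L .
\]
Dividing by $p(x)$ and summing over $x$, completeness $\sum_x M_x=\mathbf{I}$ collapses the right-hand side, yielding
\[
J_{M(\rho)}(M(\delta))=\sum_x\frac{\Delta(x)^2}{p(x)}\leq\mathrm{tr}\,L\rho L=\mathrm{tr}\,\rho L^2=J_\rho^S(\delta).
\]

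For achievability I would take $M$ to be the spectral (projection-valued) measure $\{E_j\}$ of the SLD, $L=\sum_j\lambda_j E_j$. Then $LE_j=\lambda_j E_j$ forces $\Delta(j)=\lambda_j\,\mathrm{tr}\,\rho E_j=\lambda_j p(j)$, so
\[
\sum_j\frac{\Delta(j)^2}{p(j)}=\sum_j\lambda_j^2\,\mathrm{tr}\,\rho E_j=\mathrm{tr}\,\rho L^2=J_\rho^S(\delta),
\]
matching the upper bound exactly and identifying the optimal measurement.

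The main subtlety to watch is the passage through the real part and the equality analysis of Cauchy--Schwarz: discarding the imaginary part can only weaken the inequality, which is harmless for the upper bound, and it is lossless for the spectral measure because $\mathrm{tr}\,\rho L E_j=\lambda_j\,\mathrm{tr}\,\rho E_j$ is already real. I expect the only genuine care to be in the degenerate or rank-deficient case: degenerate eigenvalues simply mean $E_j$ projects onto the full eigenspace, and if $\rho$ is not full rank one uses the standard convention that $L$ is defined on the support of $\rho$, so that the Cauchy--Schwarz factorization and the spectral-measure computation go through unchanged.
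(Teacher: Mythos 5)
Your proof is correct: the Cauchy--Schwarz bound with $A=M_x^{1/2}\rho^{1/2}$, $B=M_x^{1/2}L\rho^{1/2}$ gives the upper bound $\mathrm{tr}\,\rho L^2$, and the spectral measure of $L$ attains it (with the usual $0/0=0$ convention when $\mathrm{tr}\,\rho E_j=0$). The paper offers no proof of this statement --- it is quoted as a known result of Nagaoka --- and your argument is precisely the standard one from that literature, so there is nothing to reconcile.
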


\begin{theorem}
\label{th:G>Gmin}Suppose (M) and (N) hold. Then, $\ $%
\[
G_{\Phi}\left(  \Delta\right)  \geq G_{\Phi}^{\min}\left(  \Delta\right)
\]
Also, $G_{\Phi}^{\min}\left(  \Delta\right)  $ satisfies (M), (E), and (N).
\end{theorem}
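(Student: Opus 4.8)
The plan is to treat the two assertions separately. For the inequality $G_\Phi(\Delta)\ge G^{\min}_\Phi(\Delta)$ I would squeeze $G_\Phi(\Delta)$ from below, monotone-reducing it to the classical Fisher information of an output measurement, for which (N) supplies the base case. For the second assertion I would verify (M), (E), (N) directly from the SLD representation of $G^{\min}$, leaning on the data-processing monotonicity of the SLD Fisher information under CPTP maps.

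For the lower bound, first I would argue pointwise in the data defining $G^{\min}_\Phi(\Delta)$. Fix an ancilla $\mathcal K$, an input state $\rho\in\mathcal S(\mathcal H_{\mathrm{in}}\otimes\mathcal K)$, and a measurement $M\in\mathcal M_{\mathrm{out}}$. Regard $\rho$ as the constant channel $R$ that outputs $\rho$ irrespective of its input, and regard $M$ as a classical-output channel. Then preparing $\rho$, sending it through $\Phi_\theta\otimes\mathbf I$, and reading out $M$ produces the constant classical channel $M\circ(\Phi\otimes\mathbf I)\circ R$ with tangent $M\circ(\Delta\otimes\mathbf I)\circ R$, whose output is the distribution $M\circ(\Phi\otimes\mathbf I)(\rho)$. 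Applying (M) once for the pre-composition by $R$ and once for the post-composition by $M$ gives
\[
G_{\Phi\otimes\mathbf I}(\Delta\otimes\mathbf I)\ge G_{M\circ(\Phi\otimes\mathbf I)\circ R}\bigl(M\circ(\Delta\otimes\mathbf I)\circ R\bigr),
\]
and the right-hand side equals $J_{M\circ(\Phi\otimes\mathbf I)(\rho)}(M\circ(\Delta\otimes\mathbf I)(\rho))$ by (N), since it is a constant classical channel. Taking the supremum over $\rho$ and $M$ and invoking Nagaoka's identity $J^S_\rho(\delta)=\sup_M J_{M(\rho)}(M(\delta))$ to pass to the SLD form yields $G_{\Phi\otimes\mathbf I}(\Delta\otimes\mathbf I)\ge G^{\min}_\Phi(\Delta)$. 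The decisive remaining step is to replace $G_{\Phi\otimes\mathbf I}(\Delta\otimes\mathbf I)$ by $G_\Phi(\Delta)$, which is exactly where invariance under tensoring the identity enters.

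For the second assertion I would work throughout with the SLD representation
\[
G^{\min}_\Phi(\Delta)=\sup_{\mathcal K,\,\rho\in\mathcal S(\mathcal H_{\mathrm{in}}\otimes\mathcal K)}J^S_{(\Phi\otimes\mathbf I)(\rho)}\bigl((\Delta\otimes\mathbf I)(\rho)\bigr).
\]
For post-composition by $\Psi$, the identity $(\Psi\circ\Phi)\otimes\mathbf I=(\Psi\otimes\mathbf I)\circ(\Phi\otimes\mathbf I)$ lets me apply SLD data-processing to the map $\Psi\otimes\mathbf I$ termwise, so each term for $\Psi\circ\Phi$ is dominated by the matching term for $\Phi$, giving $G^{\min}_{\Psi\circ\Phi}(\Psi\circ\Delta)\le G^{\min}_\Phi(\Delta)$. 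For pre-composition by $\Psi$, the identity $(\Phi\circ\Psi)\otimes\mathbf I=(\Phi\otimes\mathbf I)\circ(\Psi\otimes\mathbf I)$ shows that the inputs reachable as $(\Psi\otimes\mathbf I)(\rho')$ form a subcollection of all admissible inputs, so the supremum can only shrink, giving the other half of (M). Axiom (E) is almost immediate: since $\mathcal K$ already ranges over all finite-dimensional spaces, appending an identity factor $\mathbf I_{\mathcal L}$ merely relabels the ancilla as $\mathcal L\otimes\mathcal K$, and conversely a fixed product factor on $\mathcal L$ leaves the SLD Fisher information unchanged, so the two suprema coincide. For (N), taking $\Phi$ to be the constant channel outputting a distribution $p$ makes $(\Phi\otimes\mathbf I)(\rho)=p\otimes\rho_{\mathcal K}$ a product family in which only the $p$-factor moves; its SLD Fisher information is $J^S_p(\delta)=J_p(\delta)$, independent of $\rho$, so the supremum is $J_p(\delta)$.

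The hard part is the ancilla. Monotonicity alone delivers only $G_{\Phi\otimes\mathbf I}(\Delta\otimes\mathbf I)\ge G_\Phi(\Delta)$ (feed a fixed state into the ancilla and trace it out afterward, a legitimate pre/post-concatenation), whereas the lower bound needs the reverse inequality to collapse $G_{\Phi\otimes\mathbf I}$ back onto $G_\Phi$. That reverse inequality is precisely (E) and does not follow from (M) and (N) alone: the ancilla-free analogue of $G^{\min}$ also satisfies (M) and (N) yet is strictly smaller (entangled inputs genuinely help in channel estimation) and fails (E), so (E) is truly needed for the bound to hold as stated. In the verification that $G^{\min}$ obeys the axioms, the parallel delicate points are to apply SLD data-processing to the \emph{extended} maps $\Psi\otimes\mathbf I$ rather than to $\Psi$ itself, and to check that the supremum over all ancillas is stable under further tensoring, which is exactly what makes (E) hold for $G^{\min}$.
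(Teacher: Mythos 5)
Your proof is correct and follows essentially the same route as the paper's: pass to $G_{\Phi\otimes\mathbf I}(\Delta\otimes\mathbf I)$, apply (M) for the state preparation and the measurement to reduce to a classical Fisher information via (N), take the supremum, and invoke Nagaoka's identity; the verification that $G^{\min}$ satisfies the axioms, which the paper dismisses as trivial, you carry out correctly via SLD data-processing applied to the extended maps. Your observation about the hypotheses is also well taken: the theorem lists only (M) and (N), yet the paper's own proof opens with the equality $G_{\Phi}(\Delta)=G_{\Phi\otimes\mathbf I}(\Delta\otimes\mathbf I)$, which is exactly axiom (E), and your ancilla-free counterexample shows (E) cannot be dispensed with.
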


\begin{proof}%
\[
G_{\Phi}\left(  \Delta\right)  =G_{\Phi\otimes\mathbf{I}}\left(  \Delta
\otimes\mathbf{I}\right)  \geq G_{M\circ\left(  \Phi\otimes\mathbf{I}\right)
\left(  \rho\right)  }\left(  M\circ\left(  \Delta\otimes\mathbf{I}\right)
\left(  \rho\right)  \right)  =J_{M\circ\left(  \Phi\otimes\mathbf{I}\right)
\left(  \rho\right)  }\left(  M\circ\left(  \Delta\otimes\mathbf{I}\right)
\left(  \rho\right)  \right)  .
\]
Hence, we have inequality. That $G_{\Phi}^{\min}\left(  \Delta\right)  $
satisfies (M1), (M2), (E), and (N) is trivial.
\end{proof}

\subsection{Tangent simulation of channel family and $G^{\max}$}

Suppose we have to fabricate a channel $\Phi_{\theta}$, which is drawn from a
family $\left\{  \Phi_{\theta}\right\}  $, without knowing the value of
$\theta$ but with a probability distribution $q_{\theta\text{ }}$ or
$\rho_{\theta}$ drawn from a family $\left\{  q_{\theta}\right\}  $ or
$\left\{  \rho_{\theta}\right\}  $. More specifically, we need a channel
$\Lambda$ with
\begin{equation}
\Phi_{\theta}=\Lambda\circ\left(  \mathbf{I}\otimes q_{\theta}\right)  ,
\label{simulation-1}%
\end{equation}
or%
\begin{equation}
\Phi_{\theta}=\Lambda\circ\left(  \mathbf{I}\otimes\sigma_{\theta}\right)  ,\,
\label{simulation-3}%
\end{equation}
Here, note that $\Lambda$ should not vary with the parameter $\theta$. Note
also that the former is a special case of the latter. Also, giving the value
of $\theta$ with infinite precision corresponds to the case of having the
delta distribution peaked at $\theta$. This is channel version of
randomization criteria for deficiency, which is a fundamental concept in
statistical decision theory \cite{Torgersen}.

Differentiating the both ends of (\ref{simulation-1}), (\ref{simulation-2}),
and (\ref{simulation-3}), we obtain%

\begin{equation}
\Delta=\Lambda\circ\left(  \mathbf{I}\otimes\delta\right)  ,\text{ }
\label{simulation-tangent-1}%
\end{equation}
where $\Delta\in\mathcal{T}_{\Phi}\left(  \mathcal{C}\right)  $, $\delta
\in\mathcal{T}_{q}\left(  \mathcal{P}_{\mathrm{pr}}\right)  \,$(, or
$\mathcal{T}_{\rho}\left(  \mathcal{S}_{\mathrm{pr}}\right)  $).

In the manuscript, we consider \textit{classical} \textit{tangent simulation
(, or quantum tangent simulation)}, or the triplet $\left\{  q,\delta
,\Lambda\right\}  $ (, or $\left\{  \sigma,\delta,\Lambda\right\}  $ )
satisfying (\ref{simulation-1}) (, or (\ref{simulation-3}) ) and
(\ref{simulation-tangent-1}), at the point $\Phi_{\theta}=\Phi$ only. Note
that classical and quantum tangent simulation of $\left\{  \Phi,\Delta
\right\}  $ is equivalent to simulation of the channel family $\left\{
\Phi_{\theta+t}=\Phi+t\Delta\right\}  _{t}$. (This is channel analogue of
local deficiency in statistical decision theory \cite{Torgersen}.)

Based on tangent simulation, we define :%

\begin{align*}
G_{\Phi}^{\max}\left(  \Delta\right)   &  :=\inf\left\{  J_{q}\left(
\delta\right)  \,;\{\,\Lambda,\,q,\delta\}\text{ is a classical tangent
simulation of }\left\{  q,\delta\right\}  \text{ }\right\}  ,\\
&  =\inf\left\{  J_{\sigma}^{R}\left(  \delta\right)  \,;\{\,\Lambda
,\,\sigma,\delta\}\text{ is a quantum tangent simulation of }\left\{
q,\delta\right\}  \text{ }\right\}  ,
\end{align*}
where the identity in the second line is due to characterization of RLD in
\cite{Matsumoto:rld}.

\begin{theorem}
\label{th:G<Gmax}Suppose (M), (E) and (N) hold. Then
\[
G_{\Phi}\left(  \Delta\right)  \leq G_{\Phi}^{\max}\left(  \Delta\right)  .
\]
Also, $G_{\Phi}^{\max}\left(  \Delta\right)  $ satisfies (M), (E), and (N).
\end{theorem}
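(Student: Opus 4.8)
The plan is to prove the two assertions separately, handling the inequality first and then checking that $G^{\max}$ obeys (M), (E), (N). Throughout I would work with the \emph{classical} tangent simulation; the quantum version then follows from the stated equality of the two infima.

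For the inequality $G_{\Phi}(\Delta)\le G_{\Phi}^{\max}(\Delta)$, I would fix an arbitrary classical tangent simulation $\{q,\delta,\Lambda\}$ of $\{\Phi,\Delta\}$, so that $\Phi=\Lambda\circ(\mathbf{I}\otimes q)$ and $\Delta=\Lambda\circ(\mathbf{I}\otimes\delta)$, and then transport the number $J_q(\delta)$ down to $G_{\Phi}(\Delta)$ by applying the axioms in order. First, (N) gives $G_q(\delta)=J_q(\delta)$. Next, (E)—together with swap invariance, which itself follows by applying (M) to the reversible swap channel in both directions—yields $G_{\mathbf{I}\otimes q}(\mathbf{I}\otimes\delta)=G_q(\delta)$. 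Finally, post-composition monotonicity in (M) with $\Psi=\Lambda$ gives $G_{\mathbf{I}\otimes q}(\mathbf{I}\otimes\delta)\ge G_{\Lambda\circ(\mathbf{I}\otimes q)}(\Lambda\circ(\mathbf{I}\otimes\delta))=G_{\Phi}(\Delta)$. Chaining these shows $G_{\Phi}(\Delta)\le J_q(\delta)$ for every simulation, and taking the infimum over simulations gives the claim.

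For (M) for $G^{\max}$, the key observation is that composing $\Lambda$ on either side of a simulation produces a new simulation with the \emph{same} $q,\delta$, hence the same cost $J_q(\delta)$. Concretely, $\{\Psi\circ\Lambda,q,\delta\}$ simulates $\{\Psi\circ\Phi,\Psi\circ\Delta\}$ at once, while for pre-composition I would use the identity $(\Psi\otimes\mathbf{I})\circ(\mathbf{I}\otimes q)=(\mathbf{I}\otimes q)\circ\Psi$ to verify that $\{\Lambda\circ(\Psi\otimes\mathbf{I}),q,\delta\}$ simulates $\{\Phi\circ\Psi,\Delta\circ\Psi\}$; taking infima yields both monotonicity inequalities. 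Axiom (N) for $G^{\max}$ splits into two bounds: the trivial simulation ($\Lambda=\mathbf{I}$, $q=p$, same tangent) gives $G_p^{\max}(\delta)\le J_p(\delta)$, while for the reverse bound, evaluating any simulation of the constant channel $p$ at a fixed input state exhibits $p$ and its variation as the image of $q,\delta_q$ under a fixed classical channel, so the data-processing inequality for Fisher information gives $J_q(\delta_q)\ge J_p(\delta_p)$.

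For (E) for $G^{\max}$, the direction $G_{\Phi\otimes\mathbf{I}}^{\max}\le G_{\Phi}^{\max}$ is routine: tensoring $\Lambda$ with $\mathbf{I}$ and absorbing the swap that moves the appended $q$ into position turns a simulation of $\{\Phi,\Delta\}$ into one of $\{\Phi\otimes\mathbf{I},\Delta\otimes\mathbf{I}\}$ with $J_q(\delta)$ unchanged. The reverse direction is the step I expect to be the main obstacle: given $\Phi\otimes\mathbf{I}=\Lambda\circ(\mathbf{I}\otimes q)$, I would prepend ``append a fixed classical state $\omega$ on the ancilla'' and append ``discard the ancilla'', writing $\Phi=\mathrm{Tr}_{\mathcal{K}}\circ(\Phi\otimes\mathbf{I})\circ(\cdot\otimes\omega)$; substituting the simulation and collecting the appended states shows that $\{q'=\omega\otimes q,\ \delta'=\omega\otimes\delta,\ \Lambda''=\mathrm{Tr}_{\mathcal{K}}\circ\Lambda\}$ is a classical tangent simulation of $\{\Phi,\Delta\}$ with $J_{q'}(\delta')=J_q(\delta)$, the fixed factor $\omega$ contributing zero Fisher information. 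Taking infima then gives $G_{\Phi}^{\max}\le G_{\Phi\otimes\mathbf{I}}^{\max}$, completing (E). The delicate points to get right are the bookkeeping of tensor-factor orderings (handled by swap channels absorbed into the fixed $\Lambda$) and the verification at the tangent level that the differentiated relation $\Delta=\Lambda''\circ(\mathbf{I}\otimes\delta')$ genuinely holds; these are where an error would most likely hide.
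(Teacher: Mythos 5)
Your argument for the inequality is exactly the paper's: the chain $J_q(\delta)=G_q(\delta)=G_{\mathbf{I}\otimes q}(\mathbf{I}\otimes\delta)\ge G_{\Lambda\circ(\mathbf{I}\otimes q)}(\Lambda\circ(\mathbf{I}\otimes\delta))=G_{\Phi}(\Delta)$ via (N), (E), (M), followed by an infimum over simulations. The second assertion is simply declared ``trivial'' in the paper, and your expanded verification of (M), (E), (N) for $G^{\max}$ (composing $\Lambda$ with the given channels to produce new simulations of equal cost, plus data processing for the lower bound in (N)) is a correct filling-in of that claim.
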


\begin{proof}%
\[
J_{q}\left(  \delta\right)  =G_{q}\left(  \delta\right)  =G_{\mathbf{I}\otimes
q}\left(  \mathbf{I}\otimes\delta\right)  \geq G_{\Lambda\circ\left(
\mathbf{I}\otimes q\right)  }\left(  \Lambda\circ\left(  \mathbf{I}%
\otimes\delta\right)  \right)  =G_{\Phi}\left(  \Delta\right)  .
\]
So we have the inequality. That $G_{\Phi}^{\max}\left(  \Delta\right)  $
satisfies (M), (E), and (N) is trivial.
\end{proof}

\begin{corollary}
\label{cor:gmax>gmin}%
\[
G_{\Phi}^{\max}\left(  \Delta\right)  \geq G_{\Phi}^{\min}\left(
\Delta\right)  .
\]

\end{corollary}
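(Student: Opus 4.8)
The plan is to observe that the two preceding theorems already do all the work, so the corollary follows by a single substitution. The key point is that $G^{\min}$ and $G^{\max}$ are not merely conditional bounds predicated on the existence of some abstract admissible $G$; rather, each is an explicitly defined functional that, as established in Theorems \ref{th:G>Gmin} and \ref{th:G<Gmax}, itself satisfies all three axioms (M), (E), and (N). This is what licenses substituting either of them for the generic metric $G$ appearing in the universal bounds.

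Concretely, I would apply Theorem \ref{th:G<Gmax} with the choice $G := G^{\min}$. By the second assertion of Theorem \ref{th:G>Gmin}, $G^{\min}$ satisfies (M), (E), and (N), so the hypotheses of Theorem \ref{th:G<Gmax} are met. Its conclusion then reads $G_{\Phi}^{\min}\left(\Delta\right) \leq G_{\Phi}^{\max}\left(\Delta\right)$, which is exactly the claim. Symmetrically, one could instead invoke Theorem \ref{th:G>Gmin} with $G := G^{\max}$: since $G^{\max}$ satisfies (M) and (N) by the second assertion of Theorem \ref{th:G<Gmax}, that theorem yields $G_{\Phi}^{\max}\left(\Delta\right) \geq G_{\Phi}^{\min}\left(\Delta\right)$. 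Either route closes the argument.

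There is essentially no analytic obstacle here, and that is itself the point worth flagging: the only thing one must be careful about is to avoid a circular or direct comparison of the two extremal definitions. Matching the supremum over state-and-POVM pairs defining $G^{\min}$ against the infimum over tangent simulations defining $G^{\max}$ head-on would be delicate, since it would require relating the SLD characterization of Theorem~1 to the RLD characterization underlying $G^{\max}$. The clean argument sidesteps this entirely by exploiting the structural fact that \emph{both} extremal functionals are themselves admissible monotone metrics, and hence each is squeezed between $G^{\min}$ and $G^{\max}$ by the universal sandwich $G^{\min} \leq G \leq G^{\max}$ valid for every admissible $G$. The corollary is then just the instance of that sandwich obtained by taking $G$ to be one of the two endpoints.
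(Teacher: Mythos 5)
Your argument is correct and is precisely the intended derivation: the paper states this as an unproved corollary of Theorems \ref{th:G>Gmin} and \ref{th:G<Gmax}, and substituting $G^{\min}$ (which satisfies (M), (E), (N) by the second assertion of Theorem \ref{th:G>Gmin}) into Theorem \ref{th:G<Gmax} is exactly the implicit proof. Nothing further is needed.
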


\begin{example}
\cite{Fujiwara}Consider the following family of channels :%
\[
\Lambda_{\theta}\left(  \rho\right)  =\left(  1-p_{x}-p_{y}-p_{z}\right)
\rho+p_{x}X\rho X+p_{y}Y\rho Y+p_{z}Z\rho Z,
\]
where $X$,$Y$,$Z$ are Pauli matrices and $p_{x}$,$p_{y}$, $p_{z}$ are scalar
functions of $\theta$. In other words, consider random application of Pauli
matrices with unknown probability distribution $p_{\theta}=\left(
p_{x}\left(  \theta\right)  ,p_{y}\left(  \theta\right)  ,p_{z}\left(
\theta\right)  \right)  $. \ Therefore,
\[
G_{\Lambda_{\theta}}^{\max}\left(  \Delta_{\theta}\right)  \leq J_{p_{\theta}%
}\left(  \delta_{\theta}\right)
\]
where $\Delta_{\theta}=\mathrm{d}\Lambda_{\theta}/\mathrm{d}\theta$ and
$\delta_{\theta}=\mathrm{d}p_{\theta}/\mathrm{d}\theta$. On the other hand,
let
\begin{align*}
\left\vert \mathrm{Bell}_{1}\right\rangle  &  :=\frac{1}{\sqrt{2}}\left(
\left\vert 00\right\rangle +\left\vert 11\right\rangle \right) \\
\left\vert \mathrm{Bell}_{2}\right\rangle  &  :=I\otimes X\left\vert
\mathrm{Bell}_{1}\right\rangle =\frac{1}{\sqrt{2}}\left(  \left\vert
01\right\rangle +\left\vert 10\right\rangle \right) \\
\left\vert \mathrm{Bell}_{3}\right\rangle  &  :=\sqrt{-1}\left(  I\otimes
Y\right)  \left\vert \mathrm{Bell}_{1}\right\rangle =\frac{1}{\sqrt{2}}\left(
\left\vert 01\right\rangle -\left\vert 10\right\rangle \right) \\
\left\vert \mathrm{Bell}_{4}\right\rangle  &  :=\left(  I\otimes Z\right)
\left\vert \mathrm{Bell}_{1}\right\rangle =\frac{1}{\sqrt{2}}\left(
\left\vert 00\right\rangle -\left\vert 11\right\rangle \right)  .
\end{align*}
Observe that they are orthogonal with each other. Hence, as Fujiwara and
others had pointed out, by inserting one part of $\left\vert \mathrm{Bell}%
_{1}\right\rangle $ and measuring the output, we can identify which Pauli
matrix was multiplied. Therefore,
\[
G_{\Lambda_{\theta}}^{\min}\left(  \Delta_{\theta}\right)  \geq J_{p_{\theta}%
}\left(  \delta_{\theta}\right)  .
\]
Hence, after all,
\[
G_{\Lambda_{\theta}}^{\min}\left(  \Delta_{\theta}\right)  =G_{\Lambda
_{\theta}}^{\max}\left(  \Delta_{\theta}\right)  =g_{p_{\theta}}\left(
\delta_{\theta}\right)  \text{.}%
\]

\end{example}

\subsection{Quantum states}

A quantum state can be viewed as a quantum channel with constant output. In
\cite{Matsumoto:rld}, (M2) and (N) implies that
\[
J_{\rho}^{S}\left(  \delta\right)  \leq G_{\rho}\left(  \delta\right)  \leq
J_{\rho}^{R}\left(  \delta\right)  .
\]

\section{Asymptotic theory : parallel version}

\textit{Parallel use} of $n$ of $\Phi$ means that we are given $\Phi^{\otimes
n}$, send in a big input $\rho\in\mathcal{S}_{\mathrm{in}}^{n}$ to
$\Phi^{\otimes n}$.

\subsection{Additional axioms: parallel version}

\begin{description}
\item[(Ap)] \textit{(parallel asymptotic weak additivity)} $\lim
_{n\rightarrow\infty}\frac{1}{n}G_{\Phi^{\otimes n}}\left(  \Delta^{\left(
n\right)  }\right)  =G_{\Phi}\left(  \Delta\right)  $
\end{description}

\subsection{ $G^{p,\min}$ and $G^{p,\max}$}

We define
\begin{align*}
G_{\Phi}^{\min,p}\left(  \Delta\right)   &  :=\lim_{n\rightarrow\infty}%
\frac{1}{n}G_{\Phi^{\otimes n}}^{\min}\left(  \Delta^{\left(  n\right)
}\right)  ,\\
G_{\Phi}^{\max,p}\left(  \Delta\right)   &  :=\lim_{n\rightarrow\infty}%
\frac{1}{n}G_{\Phi^{\otimes n}}^{\max}\left(  \Delta^{\left(  n\right)
}\right)  ,
\end{align*}

\begin{theorem}
(M), (E), (Ap), and (N) implies that
\[
G_{\Phi}^{\min,p}\left(  \Delta\right)  \leq G_{\Phi}\left(  \Delta\right)
\leq G_{\Phi}^{\max,p}\left(  \Delta\right)  .
\]
Also, $G_{\Phi}^{\min,p}\left(  \Delta\right)  $ $\ $and $G_{\Phi}^{\max
,p}\left(  \Delta\right)  $ satisfy (M), (E), (Ap), and (N).
\end{theorem}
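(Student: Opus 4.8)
The plan is to bootstrap everything from the single-copy Theorems~\ref{th:G>Gmin} and~\ref{th:G<Gmax}, using a few bookkeeping identities for the canonical tangent vectors $\Delta^{(n)}$ and invoking (Ap) only to pass to the limit. Before starting I would check that the two limits in the definitions of $G_{\Phi}^{\min,p}$ and $G_{\Phi}^{\max,p}$ actually exist. Writing $a_n:=G_{\Phi^{\otimes n}}^{\min}(\Delta^{(n)})$, I would feed a product input $\rho_m\otimes\rho_n$ into the $(m+n)$-copy estimation problem: since $\Phi^{\otimes(m+n)}=\Phi^{\otimes m}\otimes\Phi^{\otimes n}$ and $\Delta^{(m+n)}=\Delta^{(m)}\otimes\Phi^{\otimes n}+\Phi^{\otimes m}\otimes\Delta^{(n)}$, the output state is a product state whose tangent is that of a product family, and additivity of the SLD Fisher information then gives $a_{m+n}\ge a_m+a_n$. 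Superadditivity together with Fekete's lemma yields $\lim_n a_n/n=\sup_n a_n/n$. Dually, tensoring two tangent simulations into one and using additivity of the RLD Fisher information shows that $b_n:=G_{\Phi^{\otimes n}}^{\max}(\Delta^{(n)})$ is subadditive, so $\lim_n b_n/n=\inf_n b_n/n$ exists as well.

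For the two inequalities I would apply Theorem~\ref{th:G>Gmin} to the channel $\Phi^{\otimes n}$ with tangent $\Delta^{(n)}$ to get $G_{\Phi^{\otimes n}}(\Delta^{(n)})\ge G_{\Phi^{\otimes n}}^{\min}(\Delta^{(n)})$, divide by $n$, let $n\to\infty$, and use (Ap) on the left-hand side; this gives $G_\Phi(\Delta)\ge G_\Phi^{\min,p}(\Delta)$. The upper bound $G_\Phi(\Delta)\le G_\Phi^{\max,p}(\Delta)$ is obtained the same way from Theorem~\ref{th:G<Gmax}.

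To verify that $G^{\min,p}$ (and, verbatim, $G^{\max,p}$) again satisfies the four axioms, I would use the identities $(\Delta\circ\Psi)^{(n)}=\Delta^{(n)}\circ\Psi^{\otimes n}$ together with its post-composition analogue, $(\Phi\otimes\mathbf{I})^{\otimes n}=\Phi^{\otimes n}\otimes\mathbf{I}$ and $(\Delta\otimes\mathbf{I})^{(n)}=\Delta^{(n)}\otimes\mathbf{I}$ after reordering tensor factors, and $(\Delta^{(m)})^{(n)}=\Delta^{(mn)}$. Axiom (M) follows by applying (M) for $G^{\min}$ to each $a_n$ and dividing by $n$; axiom (E) holds because the first identity makes the two defining sequences coincide term-by-term; axiom (N) holds because (N) for $G^{\min}$ gives $G_{p^{\otimes n}}^{\min}(\delta^{(n)})=J_{p^{\otimes n}}(\delta^{(n)})=nJ_p(\delta)$, whose $1/n$-multiple is the constant $J_p(\delta)$. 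Axiom (Ap) is the only one needing the limit identity: by $(\Delta^{(m)})^{(n)}=\Delta^{(mn)}$ one has $\frac1n G_{(\Phi^{\otimes m})^{\otimes n}}^{\min}((\Delta^{(m)})^{(n)})=\frac1n a_{mn}=m\cdot\frac{1}{mn}a_{mn}\to m\,G_\Phi^{\min,p}(\Delta)$, so $G_{\Phi^{\otimes m}}^{\min,p}(\Delta^{(m)})=m\,G_\Phi^{\min,p}(\Delta)$ and its $1/m$-multiple is independent of $m$, making (Ap) immediate.

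The routine parts are the term-by-term applications of the single-copy axioms; the genuinely delicate points are all concentrated in the first paragraph. The main obstacle I anticipate is the existence of the limits: establishing super- and sub-additivity cleanly requires the additivity of the SLD and RLD Fisher informations on product families and a careful interchange of the $\sup$ (respectively $\inf$) over input states (respectively over simulations) with the tensor-product construction, and one must also ensure the limits are finite for noisy $\Phi$ so that the inequality chain is not vacuous.
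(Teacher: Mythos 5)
Your proof is correct and follows essentially the same route as the paper: sandwich $\frac{1}{n}G_{\Phi^{\otimes n}}(\Delta^{(n)})$ between $\frac{1}{n}G^{\min}_{\Phi^{\otimes n}}(\Delta^{(n)})$ and $\frac{1}{n}G^{\max}_{\Phi^{\otimes n}}(\Delta^{(n)})$ via Theorems~\ref{th:G>Gmin} and~\ref{th:G<Gmax}, then let $n\to\infty$ and invoke (Ap) on the middle term. The only difference is that you explicitly supply the super/subadditivity--Fekete argument for existence of the limits and the term-by-term axiom checks, which the paper leaves implicit as ``trivial.''
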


\begin{proof}
By Theorem\thinspace\ref{th:G>Gmin} and Theorem\thinspace\ref{th:G<Gmax},
\[
\frac{1}{n}G_{\Phi^{\otimes n}}^{\min}\left(  \Delta^{\left(  n\right)
}\right)  \leq\frac{1}{n}G_{\Phi^{\otimes n}}\left(  \Delta^{\left(  n\right)
}\right)  \leq\frac{1}{n}G_{\Phi^{\otimes n}}^{\max}\left(  \Delta^{\left(
n\right)  }\right)  \,.
\]
Taking sup of the last end and letting $n\rightarrow\infty$, we have the
assertion That $G_{\Phi}^{\min,p}\left(  \Delta\right)  $ and $G_{\Phi}%
^{\max,p}\left(  \Delta\right)  $ satisfy (M), (E), (Ap), and (N) is trivial.
\end{proof}

\section{Asymptotic theory : adaptive version}

\subsection{Adaptive repetition}

In estimating channel, we may use it sequentially, applying some channel
$\Psi_{\kappa}$ between $k$th and( $k-1)$th application of $\Phi
\otimes\mathbf{I}$:%

\begin{equation}
\prod_{k=n}^{1}\left\{  \left(  \Phi\otimes\mathbf{I}\right)  \circ
\Psi_{\kappa}\right\}  . \label{seq-prod}%
\end{equation}
To indicate such use, we define $n$-adaptive repetition of $\Phi$ by
$\Phi^{\#n}$. Formal definition is that $\Phi^{\#n}$ is a linear map which
sends the pair $\mathbf{\Psi}^{n}:=\left(  \Psi_{1},\Psi_{2},\cdots,\Psi
_{n}\right)  $ to (\ref{seq-prod}). We also define
\[
\Delta^{\left(  \#n\right)  }:=\Delta\#\Phi^{\#n-1}+\Phi\#\Delta\#\Phi
^{\#n-2}+\cdots+\Phi^{\#n-1}\#\Delta.
\]
Here, $\Delta$ is identified with a linear map from operators to operators.
For the sake of briefness, we denote:%
\begin{align}
\Phi^{\#n}\left(  \mathbf{\Psi}^{n}\right)   &  :=\prod_{k=n}^{1}\left\{
\left(  \Phi\otimes\mathbf{I}\right)  \circ\Psi_{\kappa}\right\}
,\,\nonumber\\
\Delta^{\left(  \#n\right)  }\left(  \mathbf{\Psi}^{n}\right)   &  :=\left\{
\left(  \Delta\otimes\mathbf{I}\right)  \circ\Psi_{n}\right\}  \circ
\prod_{k=n-1}^{1}\left\{  \left(  \Phi\otimes\mathbf{I}\right)  \circ
\Psi_{\kappa}\right\} \nonumber\\
&  +\left\{  \left(  \Phi\otimes\mathbf{I}\right)  \circ\Psi_{n}\right\}
\circ\left\{  \left(  \Delta\otimes\mathbf{I}\right)  \circ\Psi_{n-1}\right\}
\circ\prod_{k=n-2}^{1}\left\{  \left(  \Phi\otimes\mathbf{I}\right)  \circ
\Psi_{\kappa}\right\}  +\cdots\nonumber\\
&  +\prod_{k=n}^{2}\left\{  \left(  \Phi\otimes\mathbf{I}\right)  \circ
\Psi_{\kappa}\right\}  \circ\left\{  \left(  \Delta\otimes\mathbf{I}\right)
\circ\Psi_{1}\right\}  .
\end{align}

One can define%

\begin{align*}
G_{\Delta^{\#n}}^{\min}\left(  \Phi^{\#n}\right)   &  :=\sup\left\{
J_{\tilde{p}}\left(  \tilde{\delta}\right)  \,;\tilde{p}=M\circ\Phi
^{\#n}\left(  \mathbf{\Psi}^{n}\right)  \left(  \rho\right)  ,\,\tilde{\delta
}=M\circ\Delta^{\left(  \#n\right)  }\left(  \mathbf{\Psi}^{n}\right)  \left(
\rho\right)  \right\}  ,\\
G_{\Delta^{\#n}}^{\max}\left(  \Phi^{\#n}\right)   &  :=\inf J_{q}\left(
\delta^{\prime}\right)  ,
\end{align*}
where the infimum in the second definition is taken over all $\left\{
q,\delta^{\prime}\right\}  $ which satisfies for some $\mathbf{\Lambda}%
^{n}:=\left(  \Lambda_{1},\Lambda_{2},\cdots,\Lambda_{n}\right)  $
\begin{align}
\prod_{k=n}^{1}\left\{  \left(  \Lambda_{\kappa}\otimes\mathbf{I}%
\otimes\mathbf{I}\right)  \circ\Psi_{\kappa}\otimes\mathbf{I}\right\}  \left(
\rho\otimes q\right)   &  =\Phi^{\#n}\left(  \mathbf{\Psi}^{n}\right)  \left(
\rho\right)  \,\label{simulation-tangent-1-a}\\
\prod_{k=n}^{1}\left\{  \left(  \Lambda_{\kappa}\otimes\mathbf{I}%
\otimes\mathbf{I}\right)  \circ\left(  \Psi_{\kappa}\otimes\mathbf{I}\right)
\right\}  \left(  \rho\otimes\delta\right)   &  =\Delta^{\left(  \#n\right)
}\left(  \mathbf{\Psi}^{n}\right)  \left(  \rho\right)  .
\label{asym-simulation-tangent-2-a}%
\end{align}
\textit{Classical tangent simulation} of $\left\{  \Phi^{\#n},\Delta^{\left(
\#n\right)  }\right\}  $ is defined as a pair $\left\{  \mathbf{\Lambda}%
^{n},q,\delta^{\prime}\right\}  $ with (\ref{simulation-tangent-1-a}),
(\ref{simulation-tangent-2-a}).

\subsection{$G^{\min,a}$ and $G^{\max,a}$}

We define
\begin{align*}
G_{\Phi}^{\min,a}\left(  \Delta\right)   &  :=\lim_{n\rightarrow\infty}%
\frac{1}{n}G_{\Phi^{\#n}}^{\min}\left(  \Delta^{\left(  \#n\right)  }\right)
,\\
G_{\Phi}^{\max,a}\left(  \Delta\right)   &  :=\lim_{n\rightarrow\infty}%
\frac{1}{n}G_{\Phi^{\#n}}^{\max}\left(  \Delta^{\left(  \#n\right)  }\right)
.
\end{align*}

Then we have the following theorems.

\begin{theorem}%
\[
G_{\Phi}^{\min,a}\left(  \Delta\right)  \leq G_{\Phi}^{\max,a}\left(
\Delta\right)  .
\]
Also, $G_{\Phi}^{\min,a}\left(  \Delta\right)  $ and $G_{\Phi}^{\max,a}\left(
\Delta\right)  $ satisfy (M), (E) and (N) and (Aa).
\end{theorem}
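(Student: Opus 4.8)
The plan is to imitate the proof of the parallel-version theorem, replacing the single-copy Corollary~\ref{cor:gmax>gmin} by its adaptive counterpart. Concretely, I would first establish, for each fixed $n$, the finite-stage inequality
\[
G_{\Phi^{\#n}}^{\min}\left(  \Delta^{\left(  \#n\right)  }\right)  \le G_{\Phi^{\#n}}^{\max}\left(  \Delta^{\left(  \#n\right)  }\right)  ;
\]
dividing by $n$ and letting $n\to\infty$ then gives $G_{\Phi}^{\min,a}(\Delta)\le G_{\Phi}^{\max,a}(\Delta)$, because an inequality between two convergent sequences passes to the limit (should convergence of either normalized sequence be in doubt I would argue with $\liminf$ and $\limsup$, as is in any case implicit in the definitions of $G^{\min,a}$ and $G^{\max,a}$).

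The heart of the matter is the finite-stage inequality, and here I would argue directly by classical data processing rather than by invoking the single-copy theorems, which do not literally apply since $\Phi^{\#n}$ is a higher-order map and not an ordinary channel. Fix an arbitrary classical tangent simulation $\{\mathbf{\Lambda}^{n},q,\delta'\}$ of $\{\Phi^{\#n},\Delta^{\left(  \#n\right)  }\}$; by definition it satisfies (\ref{simulation-tangent-1-a}) and (\ref{asym-simulation-tangent-2-a}) for \emph{every} input $\rho$ and \emph{every} adaptive strategy $\mathbf{\Psi}^{n}$. Now take any $M,\rho,\mathbf{\Psi}^{n}$ entering the supremum defining $G_{\Phi^{\#n}}^{\min}$, and set $\mathcal{N}:=\prod_{k=n}^{1}\{(\Lambda_{\kappa}\otimes\mathbf{I}\otimes\mathbf{I})\circ(\Psi_{\kappa}\otimes\mathbf{I})\}$. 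The simulation identities give $\tilde p=M\circ\mathcal{N}(\rho\otimes q)$ and $\tilde\delta=M\circ\mathcal{N}(\rho\otimes\delta')$, so that the affine map $T(\,\cdot\,):=M\circ\mathcal{N}(\rho\otimes\,\cdot\,)$ is a classical (Markov) channel on the prior distribution space carrying the local data $\{q,\delta'\}$ to $\{\tilde p,\tilde\delta\}$. Monotonicity of the classical Fisher information under such maps (Cencov) yields $J_{\tilde p}(\tilde\delta)=J_{T(q)}(T(\delta'))\le J_{q}(\delta')$. Taking the supremum over $M,\rho,\mathbf{\Psi}^{n}$ and then the infimum over admissible simulations delivers the finite-stage inequality.

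For the axioms I would mirror Theorems~\ref{th:G>Gmin} and~\ref{th:G<Gmax}. Axiom (N) reduces to the classical case: when $\Phi=p$ is a constant-output channel the intermediate $\Psi_{\kappa}$ are irrelevant, $\Phi^{\#n}$ collapses to the i.i.d.\ family $p^{\otimes n}$ with tangent $\delta^{\left(  n\right)  }$, and additivity of Fisher information gives $G_{\Phi^{\#n}}^{\min}=G_{\Phi^{\#n}}^{\max}=nJ_{p}(\delta)$, hence $G_{p}^{\min,a}(\delta)=G_{p}^{\max,a}(\delta)=J_{p}(\delta)$. Axioms (M) and (E) hold at each finite $n$ because pre- and post-composition with a fixed channel, and tensoring with $\mathbf{I}$, can be absorbed into the adaptive controls, the final measurement, or the simulating channels $\mathbf{\Lambda}^{n}$; these properties are preserved under the normalized limit, exactly as in the parallel case.

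The step I expect to be the real obstacle is twofold. First, one must make sure the quantifier order in the definition of $G_{\Phi^{\#n}}^{\max}$ is the uniform one, namely that a single simulation $\{\mathbf{\Lambda}^{n},q,\delta'\}$ is required to work for all $\mathbf{\Psi}^{n}$ and all $\rho$, for only then may the fixed simulation be paired with an arbitrary supremum-element above; if instead the simulation is allowed to depend on $\mathbf{\Psi}^{n}$, the argument still goes through but the bookkeeping must be redone strategy by strategy. Second, verifying (Aa) for $G^{\min,a}$ and $G^{\max,a}$ is genuinely delicate, since these quantities are themselves limits of normalized adaptive costs: (Aa) then amounts to a telescoping block-additivity statement obtained by grouping $nm$ channel uses into $m$ adaptive blocks of length $n$, and establishing that the two iterated limits coincide is where the existence of the defining limits, e.g.\ via a super-/sub-additivity Fekete-type argument, must really be pinned down.
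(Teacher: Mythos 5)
Your proposal follows essentially the same route as the paper: the paper also proves the finite-stage inequality by pairing an arbitrary measurement/strategy/input from the supremum defining $G^{\min}_{\Phi^{\#n}}$ with an arbitrary tangent simulation from the infimum defining $G^{\max}_{\Phi^{\#n}}$, observing that $\{\tilde p^{n},\tilde\delta^{n}\}$ is the image of $\{q^{n},\delta^{\prime n}\}$ under a classical data-processing map, and invoking monotonicity of Fisher information before taking $\sup$, $\inf$, and the normalized limit. Your explicit construction of the map $T$ and your remarks on the quantifier order in the simulation definition and on the delicacy of (Aa) (which the paper's proof silently omits, asserting only that (M), (E), (N) are trivial) are sound refinements of the same argument rather than a different approach.
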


\begin{proof}
That $G_{\Phi}^{\min,a}\left(  \Delta\right)  $ and $G_{\Phi}^{\max,a}\left(
\Delta\right)  $ satisfy (M), (E), and (N) is trivial. So we prove the
inequality. Let%
\[
\tilde{p}^{n}=M\circ\Phi^{\#n}\left(  \mathbf{\Psi}^{n}\right)  \left(
\rho^{n}\right)  \,,\,\tilde{\delta}^{n}=M^{n}\circ\Delta^{\left(  \#n\right)
}\left(  \mathbf{\Psi}^{n}\right)  \left(  \rho^{n}\right)
\]
and let $\left\{  \mathbf{\Lambda}^{n},q^{n},\delta^{\prime n}\right\}  $ be a
tangent simulation of $\left\{  \Phi^{\#n},\Delta^{\left(  \#n\right)
}\right\}  $.Then by monotonicity of Fisher information, we have
\[
J_{q^{n}}\left(  \delta^{\prime n}\right)  \geq J_{\tilde{p}^{n}}\left(
\tilde{\delta}^{n}\right)  .
\]
Hence, taking infimum of the LHS and the maximum of the RHS and letting
$n\rightarrow\infty$, we obtain the second inequality.
\end{proof}

\begin{proposition}%
\begin{align*}
G_{\Phi}^{\min}\left(  \Delta\right)   &  \leq G_{\Phi}^{\min,p}\left(
\Delta\right)  \leq G_{\Phi}^{\min,a}\left(  \Delta\right) \\
&  \leq G_{\Phi}^{\max,p}\left(  \Delta\right)  \leq G_{\Phi}^{\max,a}\left(
\Delta\right)  \leq G_{\Phi}^{\max}\left(  \Delta\right)  .
\end{align*}

\end{proposition}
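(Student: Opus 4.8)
The plan is to prove the six-term chain link by link, sorting the five inequalities into three kinds. The two \emph{outer} links $G_{\Phi}^{\min}\le G_{\Phi}^{\min,p}$ and $G_{\Phi}^{\max,a}\le G_{\Phi}^{\max}$ will come from super-/sub-additivity of the single-copy extremal metrics. The two \emph{parallel-to-adaptive} links $G_{\Phi}^{\min,p}\le G_{\Phi}^{\min,a}$ and $G_{\Phi}^{\max,p}\le G_{\Phi}^{\max,a}$ will come from the fact that a parallel repetition is a special adaptive repetition. The central \emph{crossing} link $G_{\Phi}^{\min,a}\le G_{\Phi}^{\max,p}$ is the one carrying real content, and I expect it to be the main obstacle.

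For the outer links I would argue additivity and then pass to the normalized limit. On the min side, feeding the product input $\rho^{\otimes n}$ into $\Phi^{\otimes n}\otimes\mathbf{I}$ turns the defining supremum into the i.i.d.\ SLD problem, and additivity of SLD Fisher information on i.i.d.\ states gives $G_{\Phi^{\otimes n}}^{\min}(\Delta^{(n)})\ge n\,G_{\Phi}^{\min}(\Delta)$; since the left side is superadditive in $n$, its normalized limit equals its supremum (Fekete) and dominates the $n=1$ term, yielding $G_{\Phi}^{\min}\le G_{\Phi}^{\min,p}$. Dually, given a single-copy classical tangent simulation $\{\Lambda,q,\delta\}$ of $\{\Phi,\Delta\}$, using one fresh copy of $q$ at each of the $n$ adaptive steps (taking $\Lambda_{\kappa}=\Lambda$ throughout) produces a valid adaptive tangent simulation with side information $q^{\otimes n}$, because $\Phi\otimes\mathbf{I}=(\Lambda\otimes\mathbf{I})\circ(\mathbf{I}\otimes q\otimes\mathbf{I})$ is a channel identity independent of the interposed $\mathbf{\Psi}^{n}$; additivity of classical Fisher information gives $G_{\Phi^{\#n}}^{\max}(\Delta^{(\#n)})\le n\,J_{q}(\delta)$, hence $G_{\Phi}^{\max,a}\le G_{\Phi}^{\max}$ after the infimum over simulations and the limit.

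For the parallel-to-adaptive links I would note that choosing $\mathbf{\Psi}^{n}$ to line the copies up side by side realizes $\Phi^{\otimes n}$ as an instance of $\Phi^{\#n}$. On the min side the adaptive supremum then ranges over a strictly larger family, so $G_{\Phi^{\#n}}^{\min}(\Delta^{(\#n)})\ge G_{\Phi^{\otimes n}}^{\min}(\Delta^{(n)})$; on the max side the adaptive simulation must satisfy (\ref{simulation-tangent-1-a})--(\ref{asym-simulation-tangent-2-a}) for \emph{every} $\mathbf{\Psi}^{n}$, so every adaptive simulation restricts to a parallel one, the feasible set shrinks, and the infimum grows, giving $G_{\Phi^{\otimes n}}^{\max}(\Delta^{(n)})\le G_{\Phi^{\#n}}^{\max}(\Delta^{(\#n)})$. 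Dividing by $n$ and passing to the limit gives both links. For the crossing link I would fix $\mathbf{\Psi}^{n}$, so that $\Xi:=\Phi^{\#n}(\mathbf{\Psi}^{n})$ is an ordinary channel with tangent $\dot{\Xi}:=\Delta^{(\#n)}(\mathbf{\Psi}^{n})$, and apply Theorem~\ref{th:G>Gmin} to $G^{\max,p}$ (which satisfies (M), (E), (N) by the parallel-repetition theorem above): this gives $G_{\Xi}^{\max,p}(\dot{\Xi})\ge G_{\Xi}^{\min}(\dot{\Xi})$. Taking the supremum over $\mathbf{\Psi}^{n}$, the right side becomes exactly $G_{\Phi^{\#n}}^{\min}(\Delta^{(\#n)})$, so the crossing link reduces to the amortized bound $\sup_{\mathbf{\Psi}^{n}}G_{\Xi}^{\max,p}(\dot{\Xi})\le n\,G_{\Phi}^{\max,p}(\Delta)+o(n)$, which I would try to obtain by a chain-rule argument: appending one use of $\Phi\otimes\mathbf{I}$ after an arbitrary memory channel $\Psi_{\kappa}$ should raise $G^{\max,p}$ by at most the single-use rate, since a tangent simulation of the longer protocol can be assembled from one of the shorter protocol together with a fresh simulation of $\Phi$, and the relevant Fisher informations (classical, or RLD in the quantum case) add over independent side informations, while the interposed $\Psi_{\kappa}$ are free by monotonicity (M).

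The hard part is precisely this amortization, and specifically getting the \emph{regularized} rate $G_{\Phi}^{\max,p}(\Delta)$ rather than the coarser single-copy rate $G_{\Phi}^{\max}(\Delta)$. Splicing one independent copy of the side information per channel use yields only the additive bound with $G_{\Phi}^{\max}(\Delta)$, because the causally ordered structure of $\Phi^{\#n}$ forbids replacing a block of $m$ consecutive uses by a single cheaper parallel simulation of $\Phi^{\otimes m}$ (later inputs depend on earlier outputs). Reconciling this causal ordering with the block simulation cost --- i.e.\ showing that amortization commutes with the $n\to\infty$ regularization defining $G^{\max,p}$ --- is the delicate point. I expect to handle it by running the amortization directly at the level of the asymptotically additive metric $G^{\max,p}$, exploiting its property (Ap) so that each appended use contributes the per-copy additive rate $G_{\Phi}^{\max,p}(\Delta)$ in the limit, rather than amortizing the single-copy $G^{\max}$ and only regularizing afterwards.
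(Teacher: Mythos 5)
Your treatment of four of the five links is sound (the outer links by additivity of Fisher information over product inputs and product side-information, the parallel-to-adaptive links by inclusion of strategy classes), but the central link $G_{\Phi}^{\min,a}\left(\Delta\right)\leq G_{\Phi}^{\max,p}\left(\Delta\right)$ --- the only part the paper itself calls non-trivial --- is not actually proved in your proposal. Your reduction leaves you needing the amortized bound $\sup_{\mathbf{\Psi}^{n}}G_{\Xi}^{\max,p}(\dot{\Xi})\leq n\,G_{\Phi}^{\max,p}\left(\Delta\right)+o\left(n\right)$, and you correctly diagnose why the obvious splicing argument yields only the weaker bound with $G_{\Phi}^{\max}$ in place of $G_{\Phi}^{\max,p}$: the causal ordering of $\Phi^{\#n}$ prevents you from substituting a cheap parallel block simulation of $\Phi^{\otimes m}$ for $m$ consecutive adaptive uses, since later inputs depend on earlier outputs. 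But your proposed escape --- running the amortization ``directly at the level of $G^{\max,p}$, exploiting (Ap)'' --- does not close this gap: (Ap) is a statement about normalized tensor powers $\frac{1}{m}G^{\max,p}_{\Phi^{\otimes m}}\left(\Delta^{\left(m\right)}\right)$ and supplies no chain rule for sequential composition through arbitrary memory channels $\Psi_{\kappa}$. What you would need is an adaptive weak-additivity property of the infimum-type quantity $G^{\max,p}$, which is nowhere established and sits close to the paper's open conjecture $G^{\max,a}=G^{\max}$.

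The paper closes the link by a different and cheaper route. It proves that $G^{\min,a}$ itself satisfies the parallel weak-additivity axiom, i.e.\ $\frac{1}{m}G^{\min,a}_{\Phi^{\otimes m}}\left(\Delta^{\left(m\right)}\right)=G^{\min,a}_{\Phi}\left(\Delta\right)$, by sandwiching $\left(\Phi^{\otimes m}\right)^{\#n}$ between $\Phi^{\#nm}$ (richer operations, giving one direction) and $m$ non-interacting copies of $\Phi^{\#n}$ (restriction of strategies plus additivity of Fisher information over product distributions, giving the other). Once $G^{\min,a}$ is known to satisfy (M), (E), (N) and the additivity axiom, the already-proved sandwich theorem $G^{\min,p}\leq G\leq G^{\max,p}$ applies with $G=G^{\min,a}$ and yields the crossing link immediately, with no amortization of $G^{\max,p}$ over adaptive protocols ever required. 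If you want to salvage your own route, the missing ingredient is exactly such an additivity statement, and it is far easier to establish for the supremum-type quantity $G^{\min,a}$ than for $G^{\max,p}$; you should therefore move the additivity argument from the simulation side to the estimation side of the chain.
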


\begin{proof}
Non-trivial part is \ $G_{\Phi}^{\min,a}\left(  \Delta\right)  \leq G_{\Phi
}^{\max,p}\left(  \Delta\right)  $. To prove this, it suffices to show
$G_{\Phi}^{\min,a}\left(  \Delta\right)  $ satisfies (A1). Consider%
\begin{align*}
\frac{1}{m}G_{\Phi^{\otimes m}}^{\min,a}\left(  \Delta^{\left(  m\right)
}\right)   &  =\lim_{n\rightarrow\infty}\frac{1}{nm}G_{\left(  \Phi^{\otimes
m}\right)  ^{\#n}}^{\min}\left(  \left(  \Delta^{\left(  m\right)  }\right)
^{\#n}\right) \\
&  \leq\lim_{n\rightarrow p}\frac{1}{nm}G_{\Phi^{\#nm}}^{\min}\left(
\Delta^{\left(  \#mn\right)  }\right) \\
&  =G_{\Phi}^{\min,a}\left(  \Delta\right)  .
\end{align*}
Here, the inequality in the second line holds since operations allowed in the
optimization problem used to define the RHS quantity is richer than those used
to define the LHS quantity.

On the other hand, $\left(  \Phi^{\otimes m}\right)  ^{\#n}$ can be thought as
$m$-parallelization of $n$-adaptive sequence. If we restrict $\mathbf{\Psi
}^{n}$, $M$, and $\rho$ so that there is no entanglement nor interaction
between these parallelization, obtained Fisher information becomes in general
smaller. Hence,%
\begin{align*}
\frac{1}{nm}G_{\left(  \Phi^{\otimes m}\right)  ^{\#n}}^{\min}\left(  \left(
\Delta^{\left(  m\right)  }\right)  ^{\left(  \#n\right)  }\right)   &
=\frac{1}{nm}\sup\left\{  J_{\tilde{p}}\left(  \tilde{\delta}\right)
\,;\tilde{p}=M\circ\left(  \Phi^{\otimes m}\right)  ^{\#n}\left(
\mathbf{\Psi}^{n}\right)  \left(  \rho\right)  ,\,\tilde{\delta}=M\circ\left(
\Delta^{\left(  m\right)  }\right)  ^{\left(  \#n\right)  }\left(
\mathbf{\Psi}^{n}\right)  \left(  \rho\right)  \right\} \\
&  \geq\frac{1}{nm}\sup\left\{  J_{\tilde{p}}\left(  \tilde{\delta}\right)
\,;\tilde{p}=\left(  \tilde{M}\circ\left(  \Phi^{\#n}\right)  \left(
\mathbf{\tilde{\Psi}}^{n}\right)  \left(  \tilde{\rho}\right)  \right)
^{\otimes m},\,\tilde{\delta}=\left(  M\circ\Delta^{\left(  \#n\right)
}\left(  \mathbf{\tilde{\Psi}}^{n}\right)  \left(  \tilde{\rho}\right)
\right)  ^{\left(  m\right)  }\right\} \\
&  =\frac{1}{nm}\sup\left\{  J_{\tilde{p}^{\otimes m}}\left(  \tilde{\delta
}^{\otimes m}\right)  \,;\tilde{p}=\tilde{M}\circ\left(  \Phi^{\#n}\right)
\left(  \mathbf{\tilde{\Psi}}^{n}\right)  \left(  \tilde{\rho}\right)
,\,\tilde{\delta}=M\circ\Delta^{\left(  \#n\right)  }\left(  \mathbf{\tilde
{\Psi}}^{n}\right)  \left(  \tilde{\rho}\right)  \right\} \\
&  =\frac{1}{n}\sup\left\{  J_{\tilde{p}}\left(  \tilde{\delta}\right)
\,;\tilde{p}=\tilde{M}\circ\left(  \Phi^{\#n}\right)  \left(  \mathbf{\tilde
{\Psi}}^{n}\right)  \left(  \tilde{\rho}\right)  ,\,\tilde{\delta}%
=M\circ\Delta^{\left(  \#n\right)  }\left(  \mathbf{\tilde{\Psi}}^{n}\right)
\left(  \tilde{\rho}\right)  \right\} \\
&  =\frac{1}{n}G_{\Phi^{\#n}}^{\min}\left(  \Delta^{\left(  \#n\right)
}\right)  .
\end{align*}
By $n\rightarrow\infty$, \ this yields%
\[
\frac{1}{m}G_{\Phi^{\otimes m}}^{\min,a}\left(  \Delta^{\left(  m\right)
}\right)  \geq G_{\Phi}^{\min,a}\left(  \Delta\right)  .
\]
After all, we have
\[
\frac{1}{m}G_{\Phi^{\otimes m}}^{\min,a}\left(  \Delta^{\left(  m\right)
}\right)  =G_{\Phi}^{\min,a}\left(  \Delta\right)  ,
\]
and our assertion is proved.
\end{proof}

\begin{conjecture}
$G_{\Phi}^{\max,a}\left(  \Delta\right)  =G_{\Phi}^{\max}\left(
\Delta\right)  $.
\end{conjecture}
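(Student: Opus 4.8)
The plan is to reduce the conjecture to an additivity statement for $G^{\max}$ and then invoke the sandwich already in hand. The preceding Proposition gives $G_\Phi^{\max,p}(\Delta)\le G_\Phi^{\max,a}(\Delta)\le G_\Phi^{\max}(\Delta)$, so the inequality $\le$ of the conjecture is settled and it remains only to prove $G_\Phi^{\max,a}(\Delta)\ge G_\Phi^{\max}(\Delta)$. I would first aim for the stronger parallel identity $G_{\Phi^{\otimes n}}^{\max}(\Delta^{(n)})=n\,G_\Phi^{\max}(\Delta)$ for every $n$: this yields $G_\Phi^{\max,p}(\Delta)=G_\Phi^{\max}(\Delta)$, and then the two inequalities of the Proposition collapse and force $G_\Phi^{\max,a}(\Delta)=G_\Phi^{\max}(\Delta)$.

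The subadditive half $G_{\Phi^{\otimes n}}^{\max}(\Delta^{(n)})\le n\,G_\Phi^{\max}(\Delta)$ is the easy one. Starting from a near-optimal quantum tangent simulation $\{\Lambda,\sigma,\delta\}$ of $\{\Phi,\Delta\}$ as in the definition preceding Theorem \ref{th:G<Gmax}, its $n$-fold tensor power $\{\Lambda^{\otimes n},\sigma^{\otimes n},\delta^{(n)}\}$ is a quantum tangent simulation of $\{\Phi^{\otimes n},\Delta^{(n)}\}$; since the RLD Fisher information is additive on product families, $J_{\sigma^{\otimes n}}^R(\delta^{(n)})=n\,J_\sigma^R(\delta)$, and taking the infimum gives the bound. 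This also re-proves $G_\Phi^{\max,p}(\Delta)\le G_\Phi^{\max}(\Delta)$.

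The superadditive half $G_{\Phi^{\otimes (n+m)}}^{\max}(\Delta^{(n+m)})\ge G_{\Phi^{\otimes n}}^{\max}(\Delta^{(n)})+G_{\Phi^{\otimes m}}^{\max}(\Delta^{(m)})$ is the crux, because one is handed an arbitrary joint simulation $\{\Lambda,\sigma,\delta\}$ whose source $\sigma_\theta$ may be correlated across the two blocks while $\Lambda$ need not factorize, so no naive restriction to a sub-block is available. The route I would take is to identify $G_\Phi^{\max}(\Delta)$ with the RLD channel Fisher information $\mathcal J^R(\Phi,\Delta):=\sup_\rho J^R_{(\Phi\otimes\mathbf I)(\rho)}((\Delta\otimes\mathbf I)(\rho))$. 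One inequality, $G_\Phi^{\max}(\Delta)\ge \mathcal J^R(\Phi,\Delta)$, follows from monotonicity of the RLD Fisher information under the simulating map $\Lambda$ (data processing, with the reference kept intact); the reverse is the RLD characterization of \cite{Matsumoto:rld}. Granting this identification, additivity of $G^{\max}$ reduces to additivity of $\mathcal J^R$ under tensor products, i.e.\ to the assertion that entangling the reference across the two blocks gives no advantage in the input optimization. This no-go for entangled inputs is the single place where the RLD is essential — the analogous statement fails for the SLD, which is precisely why $G^{\min}$ may be strictly superadditive — and it is what I expect to be the genuine obstacle.

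As a route that bypasses the parallel reduction, I would attack $G_\Phi^{\max,a}(\Delta)\ge G_\Phi^{\max}(\Delta)$ directly by amortization: treat the RLD Fisher information of the running simulation source as a potential, telescope it along the $n$ adaptive steps, and bound each increment from below by $G_\Phi^{\max}(\Delta)$ via a chain rule for the RLD Fisher information that charges at least one unit of single-copy cost per use of $\Phi$. Summation then gives $G_{\Phi^{\#n}}^{\max}(\Delta^{(\#n)})\ge n\,G_\Phi^{\max}(\Delta)$; since the same argument shows $n\mapsto G_{\Phi^{\#n}}^{\max}(\Delta^{(\#n)})$ is superadditive, Fekete's lemma identifies the limit with the supremum, which the Proposition then pins to $G_\Phi^{\max}(\Delta)$. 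In either route the decisive ingredient is the additivity (equivalently, chain-rule) property of the RLD Fisher information under correlated sources; the Fekete bookkeeping and the subadditive estimate are routine, so the real difficulty is establishing that chain rule in full generality.
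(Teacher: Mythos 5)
First, a point of comparison: the paper offers no proof of this statement --- it is posed as an open conjecture --- so there is no argument of the author's against which to measure yours. Your text is also not a proof: both of your routes reduce the conjecture to other claims that are left unestablished, and the specific bridge you propose for closing them contains a step that appears to fail. The logical skeleton is fine as far as it goes: since the Proposition gives $G_{\Phi}^{\max,p}\left(  \Delta\right)  \leq G_{\Phi}^{\max,a}\left(  \Delta\right)  \leq G_{\Phi}^{\max}\left(  \Delta\right)  $, it would indeed suffice to show $G_{\Phi^{\otimes n}}^{\max}\left(  \Delta^{\left(  n\right)  }\right)  \geq n\,G_{\Phi}^{\max}\left(  \Delta\right)  $, and your subadditive half (tensoring near-optimal simulations and using additivity of $J^{R}$ on product state families) is routine and correct.

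The gap is the superadditive half. You propose to obtain it from the identification $G_{\Phi}^{\max}\left(  \Delta\right)  =\sup_{\rho}J_{\left(  \Phi\otimes\mathbf{I}\right)  \left(  \rho\right)  }^{R}\left(  \left(  \Delta\otimes\mathbf{I}\right)  \left(  \rho\right)  \right)  $. Only one direction of this is available: monotonicity of $J^{R}$ under the simulating map gives $G_{\Phi}^{\max}\geq\sup_{\rho}J^{R}$. The converse --- that from the sup-over-inputs RLD value one can actually construct a tangent simulation of the \emph{channel} family at that cost --- is exactly what \cite{Matsumoto:rld} supplies for state families (constant channels) and is not known for channels; it is the entire difficulty, relocated. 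Worse, your identification would force $G^{\max,p}=G^{p,R}$, whereas the paper's closing conjecture asserts $G_{\Phi}^{R,p}\left(  \Delta\right)  =\tilde{G}_{\Phi}^{\max,p}\left(  \Delta\right)  \lneqq G_{\Phi}^{\max,p}\left(  \Delta\right)  $ for QC channels; so the author evidently expects the simulation cost to exceed the RLD quantity strictly in general, and your key lemma is not merely unproven but likely false. The same objection applies to the amortized route: the ``chain rule charging one unit of single-copy cost per use of $\Phi$'' is again the assertion that an RLD-type quantity controls the simulation cost $G^{\max}$ from below, which is the open content of the conjecture. You have correctly located where the difficulty sits, but you have not closed it, and the bridge you propose points in a direction the paper itself conjectures to be wrong.
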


\subsection{Examples}

\subsubsection{Unital qubit channels}

In this case, since $G_{\Phi}^{\max}\left(  \Delta\right)  =G_{\Phi}^{\min
}\left(  \Delta\right)  $, it follows that \
\[
G_{\Phi}^{\min}\left(  \Delta\right)  =G_{\Phi}^{\min,p}\left(  \Delta\right)
=G_{\Phi}^{\min,a}\left(  \Delta\right)  =G_{\Phi}^{\max,p}\left(
\Delta\right)  =G_{\Phi}^{\max,a}\left(  \Delta\right)  =G_{\Phi}^{\max
}\left(  \Delta\right)  .
\]

\subsubsection{QC channels}

If $\left\{  \Phi_{\theta}\right\}  $ is a QC channel, $G_{\Phi}^{\min}\left(
\Delta\right)  =G_{\Phi}^{\min,p}\left(  \Delta\right)  =G_{\Phi}^{\min
,a}\left(  \Delta\right)  $. This is proved as follows. If only classical data
is fed to the succeeding measurement, the fisher information obtained is
$G_{\Phi}^{\min}\left(  \Delta\right)  $, due to. In general, we may have
large input state $\rho_{\mathrm{in}}\in\mathcal{S(}\mathcal{H}_{\mathrm{in}%
}\otimes\mathcal{K})$, where the measurement is applied only to $\mathcal{H}%
_{\mathrm{in}}$, and we are left with the measurement result (classical
information) and the post-measurement state in $\mathcal{K}$. This
post-measurement state is determineded by the measurement data, and therefore
not needed given the measurement result. Since it can be fabricated whenever necessary.

\subsubsection{Quantum states}

A quantum state can be considered as a channel with constant output. Indeed,
if $g$ satisfies (M) and (N),
\[
J_{\rho}^{S}\left(  \delta\right)  \leq g_{\rho}\left(  \delta\right)  \leq
J_{\rho}^{R}\left(  \delta\right)  .
\]
\cite{Matsumoto:rld}. Moreover, it is known that $J^{S}$ and $J^{R}$ satisfy
not only (M) and (N), but also (A1).

\subsubsection{Unitary channels and noisy channels}

If $\left\{  \Phi_{\theta}\right\}  $ are unitary operations, $G_{\Phi}%
^{\min,p}\left(  \Delta\right)  =\infty$. Hence,
\begin{equation}
G_{\Phi}^{\min,p}\left(  \Delta\right)  =G_{\Phi}^{\min,a}\left(
\Delta\right)  =G_{\Phi}^{\max,p}\left(  \Delta\right)  =G_{\Phi}^{\max
,a}\left(  \Delta\right)  =G_{\Phi}^{\max}\left(  \Delta\right)  =\infty.
\label{unitary-infty}%
\end{equation}
If $\Phi$is in the interior of $\mathcal{QC}$, and $\dim\mathcal{H}%
_{\mathrm{in}}<\infty$, $\dim\mathcal{H}_{\mathrm{in}}<\infty$, there is a
$\varepsilon>0$ such that
\[
\Phi+\theta\Delta\in\mathcal{QC},\text{ }\forall\left\vert \theta\right\vert
\leq\varepsilon.
\]
Then, \ $\left\{  \Phi,\Delta\right\}  $ can be simulated by probabilistic
mixture of $\Phi+\varepsilon\Delta$ and $\Phi-\varepsilon\Delta$. More
precisely, let $\Lambda\in\mathcal{QC}\left(  \mathcal{H}_{\mathrm{in}}\otimes%
\mathbb{C}
^{2},\mathcal{H}_{\mathrm{out}}\right)  $ be a channel such that
\begin{align*}
\Lambda\left(  \rho_{\mathrm{in}}\otimes\left\vert 0\right\rangle \left\langle
0\right\vert \right)   &  =\left(  \Phi+\varepsilon\Delta\right)  \left(
\rho_{\mathrm{in}}\otimes\left\vert 0\right\rangle \left\langle 0\right\vert
\right)  ,\\
\Lambda\left(  \rho_{\mathrm{in}}\otimes\left\vert 1\right\rangle \left\langle
1\right\vert \right)   &  =\left(  \Phi-\varepsilon\Delta\right)  \left(
\rho_{\mathrm{in}}\otimes\left\vert 1\right\rangle \left\langle 1\right\vert
\right)  ,
\end{align*}
$q$ is the probability distribution on $\left\{  0,1\right\}  $ with $q\left(
0\right)  =q\left(  1\right)  =\frac{1}{2}$, and $\delta\left(  0\right)
=\left(  2\varepsilon\right)  ^{-1}$, $\delta\left(  1\right)  =-\left(
2\varepsilon\right)  ^{-1}$.

Therefore,
\begin{equation}
G_{\Phi}^{\min,p}\left(  \Delta\right)  \leq G_{\Phi}^{\min,a}\left(
\Delta\right)  \leq G_{\Phi}^{\max,p}\left(  \Delta\right)  \leq G_{\Phi
}^{\max,a}\left(  \Delta\right)  \leq G_{\Phi}^{\max}\left(  \Delta\right)
\leq\left(  \varepsilon\right)  ^{-2}<\infty. \label{noisy-finite}%
\end{equation}

\section{Asymptotic theory of estimation of noisy channels}

\subsection{Cramer-Rao type bound}

An adaptive estimator of the channel family $\left\{  \Phi_{\theta}\right\}  $
is a sequence $\left\{  \rho_{\mathrm{in}}^{n},\mathbf{\Psi}^{n}%
,M^{n}\right\}  _{n=1}^{\infty}\,$of triplet of a pair of channels
$\mathbf{\Psi}^{n}:=\left(  \Psi_{1},\Psi_{2},\cdots,\Psi_{n}\right)  $, the
input state $\rho^{n}\in\mathcal{S}\left(  \mathcal{H}_{\mathrm{in}}%
\otimes\mathcal{K}^{n}\right)  $, and the measurement $M^{n}\in\mathcal{S}%
\left(  \mathcal{M}_{\mathrm{out}}\otimes\mathcal{K}^{n}\right)  $, which
takes values in $%
\mathbb{R}
$. $\left\{  \rho_{\mathrm{in}}^{n},\mathbf{\Psi}^{n},M^{n}\right\}
_{n=1}^{\infty}$ is said to be \textit{asymptotically unbiased} if%

\begin{equation}
\lim_{n\rightarrow\infty}\mathbb{E}_{\theta}\left[  \left\{  \rho
_{\mathrm{in}}^{n},\mathbf{\Psi}^{n},M^{n}\right\}  \right]  =\theta,\quad
\lim_{n\rightarrow\infty}\frac{\mathrm{d}}{\mathrm{d}\theta}\mathbb{E}%
_{\theta}\left[  \left\{  \rho_{\mathrm{in}}^{n},\mathbf{\Psi}^{n}%
,M^{n}\right\}  \right]  =1, \label{asym-unbiased}%
\end{equation}
where $\mathbb{E}_{\theta}\left[  \left\{  \rho_{\mathrm{in}}^{n}%
,\mathbf{\Psi}^{n},M^{n}\right\}  \right]  $ refers to the expectation of
estimate obeying the probability distribution $M^{n}\circ\Phi^{\#n}\left(
\mathbf{\Psi}^{n}\right)  \left(  \rho_{\mathrm{in}}^{n}\right)  $. This is a
regularity condition often imposed on estimators. Given an asymptotically
unbiased estimator, one can define a measurement ' $M_{\theta_{0}}^{n}$ with
measurement result
\begin{align*}
\check{\theta}_{\theta_{0}}^{n}  &  :=\frac{1}{\frac{\mathrm{d}}%
{\mathrm{d}\theta}b_{\theta_{0}}^{n}}\left(  \hat{\theta}^{n}-b_{\theta_{0}%
}^{n}\right)  +\theta_{0},\\
\hat{\theta}^{n}  &  =\left(  \frac{\mathrm{d}}{\mathrm{d}\theta}b_{\theta
_{0}}^{n}\right)  \left(  \check{\theta}_{\theta_{0}}^{n}-\theta_{0}\right)
+b_{\theta_{0}}^{n}%
\end{align*}
where $\hat{\theta}^{n}$ is the measurement result of $M^{n}$ and $b_{\theta
}^{n}:=\mathbb{E}_{\theta}\left[  \left\{  \rho_{\mathrm{in}}^{n}%
,\mathbf{\Psi}^{n},M^{n}\right\}  \right]  $. Then, \ $M_{\theta_{0}}^{n}$
satisfies
\begin{equation}
\mathbb{E}_{\theta_{0}}\left[  \left\{  \rho_{\mathrm{in}}^{n},\mathbf{\Psi
}^{n},M_{\theta_{0}}^{n}\right\}  \right]  =\theta_{0},\,\left.
\frac{\mathrm{d}}{\mathrm{d}\theta}\mathbb{E}_{\theta}\left[  \left\{
\rho_{\mathrm{in}}^{n},\mathbf{\Psi}^{n},M_{\theta_{0}}^{n}\right\}  \right]
\right\vert _{\theta=\theta_{0}}=1, \label{locally-unbiased}%
\end{equation}
and $\theta$%
\begin{align*}
&  \varliminf_{n\rightarrow\infty}\,n\mathbb{E}_{\theta_{0}}\left(
\hat{\theta}^{n}-\theta_{0}\right)  ^{2}=\varliminf_{n\rightarrow\infty
}\,n\mathbb{E}_{\theta_{0}}\left(  \left(  \frac{\mathrm{d}}{\mathrm{d}\theta
}b_{\theta_{0}}^{n}\right)  \left(  \check{\theta}_{\theta_{0}}^{n}-\theta
_{0}\right)  +b_{\theta_{0}}^{n}-\theta_{0}\right)  ^{2}\\
&  =\varliminf_{n\rightarrow\infty}\,n\mathbb{E}_{\theta_{0}}\left(  \left(
\check{\theta}_{\theta_{0}}^{n}-\theta_{0}\right)  +b_{\theta_{0}}^{n}%
-\theta_{0}\right)  ^{2}\\
&  \geq\varliminf_{n\rightarrow\infty}\,n\mathbb{E}_{\theta_{0}}\left(
\check{\theta}_{\theta_{0}}^{n}-\theta_{0}\right)  ^{2}\\
&  \geq\left(  \varlimsup_{n\rightarrow\infty}\frac{1}{n}%
J_{{\normalsize \tilde{p}}^{n}}\left(  \tilde{\delta}^{n}\right)  \right)
^{-1},
\end{align*}
where ${\normalsize \tilde{p}}^{n}:=M\circ\Phi^{\#n}\left(  \mathbf{\Psi}%
^{n}\right)  \left(  \rho_{\mathrm{in}}^{n}\right)  $ and $\tilde{\delta}%
^{n}:=M\circ\Delta^{\left(  \#n\right)  }\left(  \mathbf{\Psi}^{n}\right)
\left(  \rho_{\mathrm{in}}^{n}\right)  $.

Hence, we obtain the \textit{Cramer-Rao type bound} \cite{Matsumoto:akahira}
\begin{equation}
\inf\left\{  \varliminf_{n\rightarrow\infty}\,n\mathbb{E}_{\theta_{0}}\left(
\hat{\theta}^{n}-\theta_{0}\right)  \,;\,\left\{  \rho_{\mathrm{in}}%
^{n},\mathfrak{F}^{n},M^{n}\right\}  \text{ with (\ref{asym-unbiased}%
)}\right\}  \geq\left(  G_{\Phi}^{\min,a}\left(  \Delta\right)  \right)
^{-1}. \label{CR}%
\end{equation}

Indeed, one can show the identity in (\ref{CR}) is achievable, if $G_{\Phi
}^{\min,a}\left(  \Delta\right)  <\infty$ and some regularity conditions are
satisfied\thinspace\cite{Matsumoto:akahira}.

\subsection{On `Heisenberg rate'}

If $\left\{  \Phi_{\theta}\right\}  $ are unitary operations, due to
(\ref{unitary-infty}), (\ref{CR}) does not give any information on the
efficiency. Indeed, a number of literatures show that $\mathbb{E}_{\theta_{0}%
}\left(  \hat{\theta}^{n}-\theta_{0}\right)  \,=O\left(  \frac{1}{n^{2}%
}\right)  $ (Heisenberg rate), and some refers to application to metrology.
However, the efficiency of the optimal estimator is very weak against the
noise in the operations, as some authors have pointed out in some physical models.

Combination of \ (\ref{noisy-finite}) and (\ref{CR}) shows a general result
\cite{Matsumoto:akahira} :

\begin{theorem}
\label{th:O(1/n)}Suppose $\dim\mathcal{H}<\infty$ and that there is a
$\varepsilon_{\theta}$ such that $\Phi_{\theta}+\varepsilon_{\theta}\left(
\mathrm{d}\Phi_{\theta}/\mathrm{d}\theta\right)  $ and $\Phi-\varepsilon
_{\theta}\left(  \mathrm{d}\Phi_{\theta}/\mathrm{d}\theta\right)  $ are
completely positive. Then if $\,\left\{  \rho_{\mathrm{in}}^{n},\mathfrak{F}%
^{n},M^{n}\right\}  $ satisfies (\ref{asym-unbiased}),
\[
\mathbb{E}_{\theta}\left(  \hat{\theta}^{n}-\theta\right)  \,=O\left(
\frac{1}{n}\right)  ,\,\forall\theta\text{.}%
\]

\end{theorem}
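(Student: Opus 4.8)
The plan is to deduce the $1/n$ scaling from the single fact that the noise hypothesis forces $G_\Phi^{\min,a}(\Delta)$ to be finite, after which the Cramér--Rao type bound (\ref{CR}) and its achievability do all the work. Write $\Delta_\theta:=\mathrm{d}\Phi_\theta/\mathrm{d}\theta$ and fix $\theta$. My first step is to recognize that the stated hypothesis is precisely the premise behind (\ref{noisy-finite}): since $\Phi_\theta\pm\varepsilon_\theta\Delta_\theta$ are completely positive by assumption and trace preserving because $\mathrm{tr}\,\Delta_\theta(\rho)=\frac{\mathrm{d}}{\mathrm{d}\theta}\mathrm{tr}\,\Phi_\theta(\rho)=\frac{\mathrm{d}}{\mathrm{d}\theta}\mathrm{tr}\,\rho=0$, both maps belong to $\mathcal{QC}$.

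Next I would rerun the binary tangent simulation used for (\ref{noisy-finite}): let $\Lambda$ apply $\Phi_\theta+\varepsilon_\theta\Delta_\theta$ on the ancilla $\left\vert 0\right\rangle \left\langle 0\right\vert$ and $\Phi_\theta-\varepsilon_\theta\Delta_\theta$ on $\left\vert 1\right\rangle \left\langle 1\right\vert$, with classical source $q(0)=q(1)=\tfrac{1}{2}$ and tangent $\delta(0)=(2\varepsilon_\theta)^{-1}$, $\delta(1)=-(2\varepsilon_\theta)^{-1}$. One checks directly that $\Lambda\circ(\mathbf{I}\otimes q)=\Phi_\theta$ and $\Lambda\circ(\mathbf{I}\otimes\delta)=\Delta_\theta$, so $\{q,\delta,\Lambda\}$ is a classical tangent simulation of $\{\Phi_\theta,\Delta_\theta\}$, whence $G_\Phi^{\max}(\Delta)\le J_q(\delta)=\varepsilon_\theta^{-2}$. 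Feeding this into the chain of inequalities of the Proposition, in particular $G_\Phi^{\min,a}(\Delta)\le G_\Phi^{\max}(\Delta)$, gives the desired finiteness $G_\Phi^{\min,a}(\Delta)\le\varepsilon_\theta^{-2}<\infty$.

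With finiteness established, the two halves of the conclusion follow. For the lower bound, (\ref{CR}) gives $\varliminf_{n}n\,\mathbb{E}_\theta(\hat\theta^n-\theta)^2\ge(G_\Phi^{\min,a}(\Delta))^{-1}\ge\varepsilon_\theta^2>0$, so the MSE cannot decay faster than $\varepsilon_\theta^2/n$; this is exactly the collapse of the Heisenberg $O(1/n^2)$ rate, which in the unitary case was available only because (\ref{unitary-infty}) renders (\ref{CR}) vacuous. For the matching upper bound I would invoke the achievability statement following (\ref{CR}) from \cite{Matsumoto:akahira}, applicable precisely because $G_\Phi^{\min,a}(\Delta)<\infty$, which exhibits an asymptotically unbiased estimator with $\mathbb{E}_\theta(\hat\theta^n-\theta)^2\to(G_\Phi^{\min,a}(\Delta))^{-1}/n$. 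Since $\theta$ was arbitrary and every constant depended only on $\varepsilon_\theta$, this yields $\mathbb{E}_\theta(\hat\theta^n-\theta)^2=O(1/n)$ for all $\theta$.

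I expect the achievability half to be the real obstacle: converting the abstract optimum $(G_\Phi^{\min,a}(\Delta))^{-1}$ into an explicit sequence of adaptive inputs, interleaved channels $\mathbf{\Psi}^n$, and measurements $M^n$ that attains it needs the regularity conditions and the local-asymptotic-normality argument deferred to \cite{Matsumoto:akahira}. By contrast the finiteness bound and the resulting lower bound — the genuinely novel content that noise destroys Heisenberg scaling — drop out immediately by combining (\ref{noisy-finite}) with (\ref{CR}).
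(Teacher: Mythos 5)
Your proposal is correct and follows exactly the paper's (very terse) argument: the hypothesis yields the binary tangent simulation of (\ref{noisy-finite}) with $J_{q}\left(\delta\right)=\varepsilon_{\theta}^{-2}$, hence $G_{\Phi}^{\min,a}\left(\Delta\right)\leq\varepsilon_{\theta}^{-2}<\infty$, and (\ref{CR}) then forbids any decay faster than $1/n$. The achievability half you worry about is actually superfluous for the theorem as stated, since it quantifies over all asymptotically unbiased estimators and so only the Cram\'{e}r--Rao lower bound is needed; the paper's $O\left(1/n\right)$ is to be read as ``no faster than $1/n$.''
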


Therefore, whatever the noise it is, however small it is, Heisenberg rate
collapses. Note Theorem\thinspace\ can be easily extended to the case that
$\theta$ is multi-dimensional. Obtaining estimate $\hat{\theta}^{n}$ of
multi-dimensional parameter $\theta$ satisfying (\ref{asym-unbiased}) for each
components. Then, its first component $\hat{\theta}^{n,1}$ is a estimate of
scalar parameter $\theta^{1}$ with (\ref{asym-unbiased}). Hence, due to
Theorem\thinspace\ref{th:O(1/n)}, we have
\[
\mathbb{E}_{\theta}\left\Vert \hat{\theta}^{n,1}-\theta^{1}\right\Vert
^{2}\geq\mathbb{E}_{\theta}\left(  \hat{\theta}^{n,1}-\theta^{1}\right)
\,=O\left(  \frac{1}{n}\right)  .
\]

\section{Asymptotic theory with approximation}

\subsection{Motivations}

Axiom (N) is justified because this is consequence of the rest of the axioms and

\begin{description}
\item[(N')] $g_{q}\left(  \delta^{\prime}\right)  =1$, if $\left\{
q,\delta^{\prime}\right\}  =\left\{  \mathrm{N}\left(  0,1\right)
,\delta\mathrm{N}\left(  0,1\right)  \right\}  $

\item[(C1)] (\textit{parallel weak asymptotic continuity)} If $\left\Vert
\Phi^{n}-\Phi^{\otimes n}\right\Vert _{\mathrm{cb}}\rightarrow0$ and $\frac
{1}{\sqrt{n}}\left\Vert \Delta^{n}-\Delta^{\left(  n\right)  }\right\Vert
_{\mathrm{cb}}\rightarrow0$ then
\[
\varliminf_{n\rightarrow\infty}\frac{1}{n}\left(  G_{\Phi^{n}}\left(
\Delta^{n}\right)  -G_{\Phi^{\otimes n}}\left(  \Delta^{\left(  n\right)
}\right)  \right)  \geq0.
\]

\end{description}

The proof uses asymptotic tangent simulation \cite{Matsumoto:2010-2}, which
simulates $\{p^{\otimes n},\delta^{\left(  n\right)  }\}$ by Gaussian shift
$\left\{  q,\delta^{\prime}\right\}  =\left\{  \mathrm{N}\left(  0,1\right)
,\delta\mathrm{N}\left(  0,1\right)  \right\}  $ only approximately. Hence,
for the sake of coherency, it is preferable to build a theory based on
asymptotic tangent simulation.

\subsection{ Asymptotic tangent simulation (parallel) and $\tilde{G}^{p,\max}%
$}

An \textit{asymptotic parallel classical tangent simulation }is a
sequence\textit{ }$\left\{  q^{n},\delta^{\prime n},\Lambda^{n}\right\}
_{n=1}^{\infty}$ of $q^{n}\in$ $\mathcal{P}_{\mathrm{pr}}$ ($\sigma^{n}%
\in\mathcal{S}_{\mathrm{pr}}$), $\delta^{\prime n}\in\mathcal{T}_{q}\left(
\mathcal{P}_{\mathrm{pr}}\right)  $ and a CPTP map $\Lambda^{n}$, such that%

\begin{equation}
\lim_{n\rightarrow\infty}\left\Vert \Phi^{\otimes n}-\Lambda^{n}\left(
\mathbf{I}\otimes q^{n}\right)  \right\Vert _{\mathrm{cb}}=0,
\label{asym-simulation-tangent-1}%
\end{equation}

and
\begin{equation}
\lim_{n\rightarrow\infty}\frac{1}{\sqrt{n}}\left\Vert \Delta^{\left(
n\right)  }-\Lambda^{n}\circ\left(  \mathbf{I}\otimes\delta^{n}\right)
\right\Vert _{\mathrm{cb}}=0, \label{asym-simulation-tangent-2}%
\end{equation}
where $\Delta\in\mathcal{T}_{\Phi}\left(  \mathcal{C}\right)  $, $\delta
\in\mathcal{T}_{q}\left(  \mathcal{P}_{\mathrm{pr}}\right)  $.

Based on this, we define the following quantity.%
\[
\tilde{G}_{\Phi}^{p,\max}\left(  \Delta\right)  :=\lim_{n\rightarrow\infty
}\frac{1}{n}\inf\left\{  J_{q^{n}}\left(  \delta^{\prime n}\right)  ;\left\{
q^{n},\delta^{\prime n}\right\}  \text{ is a Gaussian shift with
(\textit{\ref{asym-simulation-tangent-1}}),
(\textit{\ref{asym-simulation-tangent-2}})}\right\}  .
\]
(Here note that $\lim$ always exists and finite.)

Also, we define
\[
G_{\Phi}^{p,R}\left(  \Delta\right)  :=\lim_{n\rightarrow\infty}\frac{1}%
{n}\sup_{\rho}J_{\Phi^{\otimes n}\left(  \rho\right)  }^{R}\left(
\Delta^{\left(  n\right)  }\left(  \rho\right)  \right)  .
\]

\begin{theorem}
\label{th:G<G<G-p-2}(M), (E), (A1), (C1) and (N') implies that
\[
G_{\Phi}^{p,\min}\left(  \Delta\right)  \leq G_{\Phi}\left(  \Delta\right)
\leq\tilde{G}_{\Phi}^{p,\max}\left(  \Delta\right)  .
\]

\end{theorem}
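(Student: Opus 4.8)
The plan is to establish the two inequalities separately, the right-hand one carrying essentially all the content. Throughout, for a given asymptotic parallel classical tangent simulation $\left\{ q^{n},\delta^{\prime n},\Lambda^{n}\right\} $ I abbreviate $\Phi^{n}:=\Lambda^{n}\left(  \mathbf{I}\otimes q^{n}\right)  $ and $\Delta^{n}:=\Lambda^{n}\circ\left(  \mathbf{I}\otimes\delta^{\prime n}\right)  $. For the left inequality $G_{\Phi}^{p,\min}\left(  \Delta\right)  \leq G_{\Phi}\left(  \Delta\right)  $ I would simply invoke Theorem \ref{th:G>Gmin} at each finite level: since $G$ satisfies (M) and (N), the same holds with $\Phi^{\otimes n}$ in place of $\Phi$, so $G_{\Phi^{\otimes n}}^{\min}\left(  \Delta^{\left(  n\right)  }\right)  \leq G_{\Phi^{\otimes n}}\left(  \Delta^{\left(  n\right)  }\right)  $. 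Dividing by $n$, letting $n\rightarrow\infty$, and identifying $\lim_{n}\frac{1}{n}G_{\Phi^{\otimes n}}\left(  \Delta^{\left(  n\right)  }\right)  =G_{\Phi}\left(  \Delta\right)  $ via (A1) yields the claim.

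For the right inequality I would first rerun the single-shot argument of Theorem \ref{th:G<Gmax} at level $n$. By (E) and then monotonicity (M) under post-composition by $\Lambda^{n}$,
\[
G_{q^{n}}\left(  \delta^{\prime n}\right)  =G_{\mathbf{I}\otimes q^{n}}\left(  \mathbf{I}\otimes\delta^{\prime n}\right)  \geq G_{\Phi^{n}}\left(  \Delta^{n}\right)  .
\]
Because $\left\{  q^{n},\delta^{\prime n}\right\}  $ is a Gaussian shift, (N') together with invariance of $G$ under an invertible affine reparametrization (a deterministic channel, so (M) applies in both directions, while $J$ transforms identically) gives $G_{q^{n}}\left(  \delta^{\prime n}\right)  =J_{q^{n}}\left(  \delta^{\prime n}\right)  $, whence $J_{q^{n}}\left(  \delta^{\prime n}\right)  \geq G_{\Phi^{n}}\left(  \Delta^{n}\right)  $.

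Next I would absorb the simulation error using (C1). The defining conditions (\ref{asym-simulation-tangent-1}) and (\ref{asym-simulation-tangent-2}) read exactly $\left\Vert \Phi^{n}-\Phi^{\otimes n}\right\Vert _{\mathrm{cb}}\rightarrow0$ and $\frac{1}{\sqrt{n}}\left\Vert \Delta^{n}-\Delta^{\left(  n\right)  }\right\Vert _{\mathrm{cb}}\rightarrow0$, which are precisely the hypotheses of (C1). Hence $\varliminf_{n}\frac{1}{n}\left(  G_{\Phi^{n}}\left(  \Delta^{n}\right)  -G_{\Phi^{\otimes n}}\left(  \Delta^{\left(  n\right)  }\right)  \right)  \geq0$. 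Since (A1) supplies the genuine limit $\lim_{n}\frac{1}{n}G_{\Phi^{\otimes n}}\left(  \Delta^{\left(  n\right)  }\right)  =G_{\Phi}\left(  \Delta\right)  $, adding the two gives $\varliminf_{n}\frac{1}{n}G_{\Phi^{n}}\left(  \Delta^{n}\right)  \geq G_{\Phi}\left(  \Delta\right)  $, and therefore $\varliminf_{n}\frac{1}{n}J_{q^{n}}\left(  \delta^{\prime n}\right)  \geq G_{\Phi}\left(  \Delta\right)  $. As this holds for every Gaussian-shift simulation, taking the infimum over simulations yields $\tilde{G}_{\Phi}^{p,\max}\left(  \Delta\right)  \geq G_{\Phi}\left(  \Delta\right)  $.

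The main obstacle is this last combination: one must marry the merely-$\varliminf$ control coming from (C1) with the true limit coming from (A1), and verify that the two distinct normalizations built into the simulation — the plain cb-norm for $\Phi^{\otimes n}$ but the $\frac{1}{\sqrt{n}}$-weighted cb-norm for $\Delta^{\left(  n\right)  }$ — are exactly those under which (C1) is phrased, so that replacing $\Phi^{\otimes n},\Delta^{\left(  n\right)  }$ by their Gaussian simulants costs nothing at the $1/n$ scale. A secondary point requiring care is deriving $G=J$ on arbitrary Gaussian shifts from the single normalization (N'), rather than quietly assuming the full axiom (N), so that the hypotheses of the theorem are genuinely respected.
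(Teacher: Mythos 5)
Your proposal is correct and follows essentially the same route as the paper: the right-hand inequality is exactly the paper's chain (monotonicity under post-composition by $\Lambda^{n}$, axiom (E), the Gaussian normalization from (N'), and (C1) combined with (A1) to absorb the simulation error at the $1/n$ scale), just written in the reverse order, and the left-hand inequality via Theorem \ref{th:G>Gmin} plus (A1) is the standard argument the paper leaves implicit. Your closing caveats (the $\varliminf$-versus-$\lim$ bookkeeping and deriving $G=J$ on rescaled Gaussian shifts from (N') rather than assuming (N)) are points the paper itself glosses over, so they do not indicate a gap in your approach.
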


\begin{proof}%
\begin{align*}
0  &  \leq\varliminf_{n\rightarrow\infty}\frac{1}{n}\left(  G_{\Lambda
^{n}\left(  \mathbf{I}\otimes q^{n}\right)  }\left(  \Lambda^{n}\left(
\mathbf{I}\otimes\delta^{n}\right)  \right)  -G_{\Phi^{\otimes n}}\left(
\Delta^{\left(  n\right)  }\right)  \right) \\
&  \leq\varliminf_{n\rightarrow\infty}\frac{1}{n}\left(  G_{\mathbf{I}\otimes
q^{n}}\left(  \mathbf{I}\otimes\delta^{n}\right)  -G_{\Phi^{\otimes n}}\left(
\Delta^{\left(  n\right)  }\right)  \right) \\
&  =\varliminf_{n\rightarrow\infty}\frac{1}{n}g_{q^{n}}\left(  \delta
^{n}\right)  -G_{\Phi}\left(  \Delta\right)  =\varliminf_{n\rightarrow\infty
}\frac{1}{n}J_{q^{n}}\left(  \delta^{n}\right)  -G_{\Phi}\left(
\Delta\right)  .
\end{align*}

\end{proof}

\begin{theorem}
\label{th:asym-cont-Gp}If $\dim\mathcal{H}_{\mathrm{in}}\,<\infty$ and
$\dim\mathcal{H}_{\mathrm{out}}\,<\infty$, $G^{p,\min}$ and $G^{p,R}$ satisfy
(M), (E), (A1), (C1) and (N').
\end{theorem}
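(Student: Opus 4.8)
The plan is to treat the five axioms in two groups. The properties (M), (E), (N') and (A1) I expect to be essentially formal, inherited from the single-copy building blocks of Theorem~\ref{th:G>Gmin} and from the structure of the regularized limits. For (M) I would note that $G^{\min}$ and the RLD Fisher information $J^{R}$ are monotone under pre- and post-composition, and that tensoring, the $\frac{1}{n}$ normalization, and the limit all preserve an inequality, so $G^{p,\min}$ and $G^{p,R}$ stay monotone. For (E), since both quantities are defined through a supremum over ancilla-extended inputs $\rho\in\mathcal{S}(\mathcal{H}^{\otimes n}\otimes\mathcal{K})$, adjoining an extra identity factor merely enlarges (without changing) the feasible set, so the value is invariant. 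For (N') I would use that on classical distributions both $G^{\min}$ and $J^{R}$ reduce to the classical Fisher information $J$, which is additive; hence for $\{q,\delta'\}=\{\mathrm{N}(0,1),\delta\mathrm{N}(0,1)\}$ one gets $\frac{1}{n}J_{q^{\otimes n}}(\delta'^{(n)})=J_{q}(\delta')=1$, so the limit is $1$. Finally (A1) follows because $n\mapsto G^{\min}_{\Phi^{\otimes n}}(\Delta^{(n)})$ is superadditive (independent strategies on disjoint blocks add their Fisher informations), so by Fekete's lemma the limit equals the supremum and is automatically weakly additive; the doubling argument from the proof that $G^{\min,a}$ satisfies (A1) applies verbatim, and in fact yields the exact identity $G^{p,\min}_{\Phi^{\otimes m}}(\Delta^{(m)})=m\,G^{p,\min}_{\Phi}(\Delta)$.

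The substance of the theorem, and the only place the finiteness of $\dim\mathcal{H}_{\mathrm{in}},\dim\mathcal{H}_{\mathrm{out}}$ enters, is (C1). Using the additivity just established I would rewrite the comparison term as $G_{\Phi^{\otimes n}}^{p,\min}(\Delta^{(n)})=n\,G_{\Phi}^{p,\min}(\Delta)$, so that (C1) becomes the lower-semicontinuity statement $\varliminf_{n\to\infty}\frac{1}{n}G^{p,\min}_{\Phi^{n}}(\Delta^{n})\geq G^{p,\min}_{\Phi}(\Delta)$. Since $G^{p,\min}_{\Phi^{n}}(\Delta^{n})\geq G^{\min}_{\Phi^{n}}(\Delta^{n})$ (the $m=1$ term of the regularizing supremum), it suffices to lower bound $\frac{1}{n}G^{\min}_{\Phi^{n}}(\Delta^{n})$. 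Because $G^{\min}$ is a supremum over input states and measurements, any single strategy furnishes a lower bound; the plan is to take a near-optimal strategy $(\rho_{n},M_{n})$ for $\frac{1}{n}G^{\min}_{\Phi^{\otimes n}}(\Delta^{(n)})$ (whose value tends to $G^{p,\min}_{\Phi}(\Delta)$) and feed it into $\Phi^{n}$. The hypothesis $\|\Phi^{n}-\Phi^{\otimes n}\|_{\mathrm{cb}}\to 0$, being stable under tensoring with the ancilla identity, forces the output distributions $M_{n}\circ\Phi^{n}(\rho_{n})$ and $M_{n}\circ\Phi^{\otimes n}(\rho_{n})$ to be $L^{1}$-close, while $\frac{1}{\sqrt{n}}\|\Delta^{n}-\Delta^{(n)}\|_{\mathrm{cb}}\to 0$ controls the tangent data in precisely the scaling that matters for a functional quadratic in the tangent.

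The analytic heart is therefore a perturbation estimate for classical Fisher information: if $\|p^{n}-\tilde p^{n}\|_{1}\to 0$ and $\frac{1}{\sqrt{n}}\|\delta^{n}-\tilde\delta^{n}\|_{1}\to 0$, then $\frac{1}{n}\bigl(J_{p^{n}}(\delta^{n})-J_{\tilde p^{n}}(\tilde\delta^{n})\bigr)\to 0$. Writing $J$ as a quadratic form in $\delta$ weighted by $p^{-1}$ exhibits the expected Lipschitz behaviour away from the boundary, and the $\frac{1}{\sqrt{n}}$ scaling makes the cross term $o(n)$ after normalization. For $G^{p,R}$ the argument is parallel but cleaner, since the RLD Fisher information is the explicit quadratic form $J^{R}_{\rho}(\delta)=\Re\,\mathrm{tr}\,\delta^{\dagger}\delta\,\rho^{-1}$, jointly continuous on the locus $\rho>0$; one transfers the near-optimal input $\rho_{n}$ and estimates the change in this form directly.

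The main obstacle I anticipate is that Fisher information is not uniformly continuous up to the boundary: $J_{p}(\delta)$ and $J^{R}_{\rho}(\delta)$ blow up as components of $p$, or eigenvalues of $\rho$, approach $0$, and the ambient simplex has dimension growing with $n$. To control this I would regularize, replacing the output data by $(1-\eta)\,(\cdot)+\eta\,\tau$ for a fixed full-rank reference $\tau$ on the single-copy output space and a small mixing parameter $\eta$, so that all perturbed states stay bounded away from the boundary; finiteness of $\dim\mathcal{H}_{\mathrm{out}}$ bounds the per-copy cost and guarantees that the induced change in $\frac{1}{n}J$ is $O(\eta)$ uniformly in $n$. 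Sending $n\to\infty$ first and then $\eta\to 0$ in a diagonal argument removes the regularization while preserving the rate, yielding $\varliminf\geq 0$. Verifying that this regularization cost does not secretly scale with $n$ once the growing tensor dimension is accounted for is the step I expect to require the most care.
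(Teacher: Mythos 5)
You organize the proof the right way---(M), (E), (N'), and (A1) by formal inheritance and superadditivity, with (C1) as the one substantive point, attacked by transferring a near-optimal strategy from $\Phi^{\otimes n}$ to the perturbed $\Phi^{n}$ and controlling the resulting change in Fisher information. That is also the paper's skeleton (it transfers a near-optimal input of the form $\rho_{l,\varepsilon}^{\otimes (m/l)}$ built from $l$-blocks). But the analytic heart is where your plan breaks down. The perturbation estimate you rely on---that $\Vert p^{n}-\tilde{p}^{n}\Vert_{1}\rightarrow 0$ and $\frac{1}{\sqrt{n}}\Vert\delta^{n}-\tilde{\delta}^{n}\Vert_{1}\rightarrow 0$ force $\frac{1}{n}\bigl(J_{p^{n}}(\delta^{n})-J_{\tilde{p}^{n}}(\tilde{\delta}^{n})\bigr)\rightarrow 0$---is not a Lipschitz statement that boundary regularization can rescue. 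Even after mixing each output copy with $\eta\tau$, the $n$-fold output distribution has atoms as small as $(\eta\tau_{\min})^{n}$ (a global mixture $(1-\eta)(\cdot)+\eta\tau^{\otimes n}$ is no better), so the quadratic form $\sum_{x}\delta(x)^{2}/p(x)$ has an effective Lipschitz constant in $p$ that grows exponentially in $n$; an $L^{1}$ perturbation of size $o(1)$ can then change $\frac{1}{n}J$ by an unbounded amount. Your claim that the regularization costs only $O(\eta)$ per copy uniformly in $n$ controls the cost on the unperturbed i.i.d.\ reference (by additivity of Fisher information over independent blocks), but says nothing about stability of the non-product perturbed output, which is exactly the point at issue.

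The ingredient the paper supplies, and which your proposal is missing, is a variational bounded-observable argument. Instead of evaluating Fisher information as a quadratic form, one uses Schwartz's inequality $J^{S}_{\sigma}(\delta)\geq \vert\mathrm{tr}\,\delta X\vert^{2}/\mathrm{tr}\,\sigma X^{2}$ (and its RLD analogue), taking for $X$ the optimizing SLD of the i.i.d.\ reference \emph{truncated} by a spectral cutoff $P_{a}$. Since $XP_{a}$ is bounded by $a$, the numerator and denominator shift only by $O\left(a\Vert\Psi_{m}\Vert_{\mathrm{cb}}\right)$, $O\left(a^{2}\Vert\Psi_{m}\Vert_{\mathrm{cb}}\right)$ and $O\left(\frac{a}{\sqrt{m}}\Vert D_{m}\Vert_{\mathrm{cb}}\right)$ under the hypotheses of (C1)---this is where the $\frac{1}{\sqrt{n}}$ scaling is actually consumed. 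The loss from the cutoff is then controlled by a Chebyshev-type fourth-moment bound: because $X$ is a normalized sum of $m/l$ independent single-block SLDs, $\mathrm{tr}\,\sigma^{\otimes(m/l)}X^{4}=O(1)$ and the truncation error is $O\left(1/(a^{2}m)\right)$. Without some device of this kind---a \emph{bounded} near-optimal estimator whose Fisher-information yield is stable under cb-norm perturbations---your step asserting that the relevant error terms are $o(n)$ after normalization has no justification, and the proof of (C1) does not close.
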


\begin{proof}
That $G^{p,\min}$ and $G^{p,R}$ satisfy (M), (E), (A1), and (N') is trivial.
Hence, we prove (C1). Choose $\rho_{l,\varepsilon}$ so that
\[
\frac{1}{l}J_{\Phi^{\otimes l}\left(  \rho_{l,\varepsilon}\right)  }%
^{S}\left(  \Delta^{\left(  l\right)  }\left(  \rho_{l,\varepsilon}\right)
\right)  \geq\frac{1}{l}G_{\Phi^{\otimes l}}^{\min}\left(  \Delta^{\left(
l\right)  }\right)  -\varepsilon.
\]
Also, let%
\begin{align*}
\sigma_{l,\varepsilon}  &  :=\Phi^{\otimes l}\left(  \rho_{l,\varepsilon
}\right)  ,\\
\delta_{l,\varepsilon}  &  :=\Delta^{\left(  l\right)  }\left(  \rho
_{l,\varepsilon}\right)  ,
\end{align*}
and%
\begin{align*}
\sigma_{l,m,\varepsilon}^{\prime}  &  :=\left(  \Phi^{\otimes m}+\Psi
_{m}\right)  \left(  \rho_{l,\varepsilon}^{\otimes\left(  m/l\right)
}\right)  ,\\
\delta_{l,m,\varepsilon}^{\prime}  &  :=\left(  \Delta^{\left(  m\right)
}+D_{m}\right)  \left(  \rho_{m,\varepsilon}^{\otimes\left(  m/l\right)
}\right)  .
\end{align*}
Then
\begin{align*}
\left\Vert \sigma_{l,m,\varepsilon}^{\prime}-\sigma_{l,\varepsilon}^{\otimes
m}\right\Vert _{1}  &  \leq\left\Vert \Psi_{m}\right\Vert _{\mathrm{cb}},\\
\left\Vert \delta_{l,m,\varepsilon}^{\prime}-\delta_{l,\varepsilon}^{\left(
m\right)  }\right\Vert _{1}  &  \leq\left\Vert D_{m}\right\Vert _{\mathrm{cb}%
}.
\end{align*}
Observe by Schwartz's inequality,
\[
J_{\sigma_{l,\varepsilon}^{\otimes\left(  m/l\right)  }}^{S}\left(
\delta_{l,\varepsilon}^{\left(  m/l\right)  }\right)  \geq\frac{\left\vert
\mathrm{tr}\,\delta_{l,\varepsilon}^{\left(  m/l\right)  }X\right\vert ^{2}%
}{\mathrm{tr}\,\sigma_{l,\varepsilon}^{\otimes\left(  m/l\right)  }X^{2}}%
\]
and the equality is achieved by
\[
X_{m,l,\varepsilon}=\frac{l}{mJ_{l,\varepsilon}^{S}}L_{\sigma_{l,\varepsilon
}^{\otimes\left(  m/l\right)  }}^{S}\left(  \delta_{l,\varepsilon}^{\left(
m/l\right)  }\right)  ,
\]
where $J_{l,\varepsilon}^{S}=J_{\sigma_{l,\varepsilon}}^{S}\left(
\delta_{l,\varepsilon}\right)  $. Let $X_{m,l,\varepsilon}=\int xE\left(
d\mathrm{d}\,x\right)  $ be the spectral decomposition, and define
$P_{a}:=\int_{x\leq a}E\left(  d\mathrm{d}\,x\right)  .$ Then,
\begin{align*}
\frac{1}{m}J_{\sigma_{l,m,\varepsilon}^{\prime}}^{S}\left(  \delta
_{l,m,\varepsilon}^{\prime}\right)   &  \geq\frac{\left\vert \mathrm{tr}%
\,\delta_{l,m,\varepsilon}^{\prime}X_{m,l,\varepsilon}P_{a}\right\vert ^{2}%
}{m\mathrm{tr}\,\sigma_{l,m,\varepsilon}^{\prime}\left(  X_{m,l,\varepsilon
}P_{a}\right)  ^{2}}=\frac{\left\vert \frac{1}{\sqrt{m}}\mathrm{tr}%
\,\delta_{l,m,\varepsilon}^{\prime}X_{m,l,\varepsilon}P_{a}\right\vert ^{2}%
}{\mathrm{tr}\,\sigma_{l,m,\varepsilon}^{\prime}\left(  X_{m,l,\varepsilon
}P_{a}\right)  ^{2}}\\
&  \geq\frac{\left\vert \frac{1}{\sqrt{m}}\mathrm{tr}\,\delta_{l,\varepsilon
}^{\left(  m/l\right)  }X_{m,l,\varepsilon}P_{a}\right\vert ^{2}}%
{\mathrm{tr}\,\sigma_{l,\varepsilon}^{\otimes\left(  m/l\right)  }\left(
X_{m,l,\varepsilon}P_{a}\right)  ^{2}}+O\left(  \left\Vert \Psi_{m}\right\Vert
_{\mathrm{cb}}\right)  +O\left(  \frac{1}{\sqrt{m}}\left\Vert D_{m}\right\Vert
_{\mathrm{cb}}\right) \\
&  =\frac{\left\vert \frac{\sqrt{m}}{l}J_{l,\varepsilon}^{S}\mathrm{tr}%
\sigma_{l,\varepsilon}^{\otimes\left(  m/l\right)  }\left(  X_{m,l,\varepsilon
}\right)  ^{2}P_{a}\right\vert ^{2}}{\mathrm{tr}\,\sigma_{l,\varepsilon
}^{\otimes\left(  m/l\right)  }\left(  X_{m,l,\varepsilon}P_{a}\right)  ^{2}%
}+O\left(  \left\Vert \Psi_{m}\right\Vert _{\mathrm{cb}}\right)  +O\left(
\frac{1}{\sqrt{m}}\left\Vert D_{m}\right\Vert _{\mathrm{cb}}\right) \\
&  =\frac{m}{l^{2}}\left(  J_{l,\varepsilon}^{S}\right)  ^{2}\mathrm{tr}%
\sigma_{l,\varepsilon}^{\otimes\left(  m/l\right)  }\left(  X_{m,l,\varepsilon
}\right)  ^{2}P_{a}+O\left(  \left\Vert \Psi_{m}\right\Vert _{\mathrm{cb}%
}\right)  +O\left(  \frac{1}{\sqrt{m}}\left\Vert D_{m}\right\Vert
_{\mathrm{cb}}\right)  .
\end{align*}
On the other hand,
\begin{align*}
&  \left\vert \frac{m}{l^{2}}\left(  J_{l,\varepsilon}^{S}\right)
^{2}\mathrm{tr}\sigma_{l,\varepsilon}^{\otimes\left(  m/l\right)  }\left(
X_{m,l,\varepsilon}\right)  ^{2}-\frac{m}{l^{2}}\left(  J_{l,\varepsilon}%
^{S}\right)  ^{2}\mathrm{tr}\sigma_{l,\varepsilon}^{\otimes m}\left(
X_{m,l,\varepsilon}\right)  ^{2}P_{a}\right\vert \\
&  =\left\vert \frac{m}{l^{2}}\left(  J_{l,\varepsilon}^{S}\right)
^{2}\mathrm{tr}\sigma_{l,\varepsilon}^{\otimes\left(  m/l\right)  }\left(
X_{m,l,\varepsilon}\right)  ^{2}\left(  \mathbf{1}-P_{a}\right)  \right\vert
\\
&  \leq\frac{1}{a^{2}}\left\vert \frac{m}{l^{2}}\left(  J_{l,\varepsilon}%
^{S}\right)  ^{2}\mathrm{tr}\sigma_{l,\varepsilon}^{\otimes\left(  m/l\right)
}\left(  X_{m,l,\varepsilon}\right)  ^{4}\right\vert \\
&  =\frac{l^{2}}{a^{2}\left(  J_{l,\varepsilon}^{S}\right)  ^{2}m^{3}}\left\{
\frac{m}{l}\left(  \frac{m}{l}-1\right)  J_{l,\varepsilon}^{S}+\frac{m}%
{l}\mathrm{tr}\sigma_{l,\varepsilon}\left(  L_{\sigma_{l,\varepsilon}}%
^{S}\left(  \delta_{l,\varepsilon}\right)  \right)  ^{4}\right\} \\
&  \leq.\frac{1}{a^{2}\left(  J_{l,\varepsilon}^{S}\right)  ^{2}m}\left\{
J_{l,\varepsilon}^{S}+\mathrm{tr}\sigma_{l,\varepsilon}\left(  L_{\sigma
_{l,\varepsilon}}^{S}\left(  \delta_{l,\varepsilon}\right)  \right)
^{4}\right\}
\end{align*}
Therefore,
\begin{align*}
&  \frac{1}{m}J_{\sigma_{l,m,\varepsilon}^{\prime}}^{S}\left(  \delta
_{l,m,\varepsilon}^{\prime}\right) \\
&  \geq\frac{m}{l^{2}}\left(  J_{l,\varepsilon}^{S}\right)  ^{2}%
\mathrm{tr}\sigma_{l,\varepsilon}^{\otimes m}\left(  X_{m,l,\varepsilon
}\right)  ^{2}\\
&  +O\left(  \left\Vert \Psi_{m}\right\Vert _{\mathrm{cb}}\right)  +O\left(
\frac{1}{\sqrt{m}}\left\Vert D_{m}\right\Vert _{\mathrm{cb}}\right)  -\frac
{1}{a^{2}\left(  J_{l,\varepsilon}^{S}\right)  ^{2}m}\left\{  J_{l,\varepsilon
}^{S}+\mathrm{tr}\sigma_{l,\varepsilon}\left(  L_{\sigma_{l,\varepsilon}}%
^{S}\left(  \delta_{l,\varepsilon}\right)  \right)  ^{4}\right\} \\
&  =\frac{1}{l}J_{l,\varepsilon}^{S}\\
&  +O\left(  \left\Vert \Psi_{m}\right\Vert _{\mathrm{cb}}\right)  +O\left(
\frac{1}{\sqrt{m}}\left\Vert D_{m}\right\Vert _{\mathrm{cb}}\right)  -\frac
{1}{a^{2}\left(  J_{l,\varepsilon}^{S}\right)  ^{2}m}\left\{  J_{l,\varepsilon
}^{S}+\mathrm{tr}\sigma_{l,\varepsilon}\left(  L_{\sigma_{l,\varepsilon}}%
^{S}\left(  \delta_{l,\varepsilon}\right)  \right)  ^{4}\right\}  .
\end{align*}
Hence,
\begin{align*}
&  \frac{1}{m}G_{\Phi^{\otimes m}+\Psi_{m}}^{p,\min}\left(  \Delta^{\left(
m\right)  }+D_{m}\right) \\
&  =\lim_{n\rightarrow\infty}\frac{1}{mn}G_{\left(  \Phi^{\otimes m}+\Psi
_{m}\right)  ^{\otimes n}}^{\min}\left(  \left(  \Delta^{\left(  m\right)
}+D_{m}\right)  ^{\left(  n\right)  }\right)  \geq\frac{1}{m}G_{\Phi^{\otimes
m}+\Psi_{m}}^{\min}\left(  \Delta^{\left(  m\right)  }+D_{m}\right) \\
&  \geq\frac{1}{m}J_{\sigma_{l,m,\varepsilon}^{\prime}}^{S}\left(
\delta_{l,m,\varepsilon}^{\prime}\right) \\
&  \geq\frac{1}{l}G_{\Phi^{\otimes l}}^{\min}\left(  \Delta^{\left(  l\right)
}\right)  -\varepsilon\\
&  +O\left(  \left\Vert \Psi_{m}\right\Vert _{\mathrm{cb}}\right)  +O\left(
\frac{1}{\sqrt{m}}\left\Vert D_{m}\right\Vert _{\mathrm{cb}}\right)  -\frac
{1}{a^{2}\left(  J_{l,\varepsilon}^{S}\right)  ^{2}m}\left\{  J_{l,\varepsilon
}^{S}+\mathrm{tr}\sigma_{l,\varepsilon}\left(  L_{\sigma_{l,\varepsilon}}%
^{S}\left(  \delta_{l,\varepsilon}\right)  \right)  ^{4}\right\} \\
&  \rightarrow\frac{1}{l}G_{\Phi^{\otimes l}}^{\min}\left(  \Delta^{\left(
l\right)  }\right)  -\varepsilon\text{ }(m\rightarrow\infty)\\
&  \rightarrow G_{\Phi}^{p,\min}\left(  \Delta\right)  -\varepsilon
\quad(l\rightarrow\infty).
\end{align*}
Therefore,
\begin{align*}
&  \varliminf_{m\rightarrow\infty}\frac{1}{m}\left\{  G_{\Phi^{\otimes m}%
+\Psi_{m}}^{p,\min}\left(  \Delta^{\left(  m\right)  }+D_{m}\right)  -\frac
{1}{m}G_{\Phi^{\otimes m}}^{p,\min}\left(  \Delta^{\left(  m\right)  }\right)
\right\} \\
&  \geq G_{\Phi}^{p,\min}\left(  \Delta\right)  -\varepsilon-\varliminf
_{m\rightarrow\infty}\frac{1}{m}G_{\Phi^{\otimes m}}^{p,\min}\left(
\Delta^{\left(  m\right)  }\right) \\
&  =G_{\Phi}^{p,\min}\left(  \Delta\right)  -\varepsilon-G_{\Phi}^{p,\min
}\left(  \Delta\right)  =-\varepsilon
\end{align*}
Since $\varepsilon>0$ is arbitrary, we have (C1) for $G^{p,\min}$. (C1) for
$G^{p,R}$ is proved almost parallelly, utilizing the following consequence of
Schwartz's inequality:%
\[
J_{\rho}^{R}\left(  \delta\right)  \geq\frac{\left\vert \mathrm{tr}\,\delta
X\right\vert ^{2}}{\mathrm{tr}\,\rho X^{\dagger}X},
\]
where $X$ is an arbitrary operator.
\end{proof}

\begin{corollary}%
\[
G_{\Phi}^{p,\min}\left(  \Delta\right)  \leq G_{\Phi}^{p,R}\left(
\Delta\right)  \leq\tilde{G}_{\Phi}^{p,\max}\left(  \Delta\right)  .
\]

\end{corollary}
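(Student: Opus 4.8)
The plan is to recognize that the two theorems just established already contain everything needed, so that the corollary follows by a single substitution together with one elementary pointwise inequality, rather than by any fresh estimate.

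First I would record the standard fact that the SLD Fisher information never exceeds the RLD Fisher information, $J_{\rho}^{S}(\delta)\le J_{\rho}^{R}(\delta)$, which is immediate from the Schwarz-inequality variational form $J_{\rho}^{R}(\delta)=\sup_{X}|\mathrm{tr}\,\delta X|^{2}/\mathrm{tr}\,\rho X^{\dagger}X$ already invoked at the end of the proof of Theorem~\ref{th:asym-cont-Gp}. Applying this pointwise inside the definitions, then taking the supremum over input states and the $\frac{1}{n}$ limit, yields the left inequality $G_{\Phi}^{p,\min}(\Delta)\le G_{\Phi}^{p,R}(\Delta)$ directly. Here one uses that appending the ancilla $\mathcal{K}$ together with the identity channel in the definition of $G^{\min}$ only enlarges the class of input states over which the RLD quantity is maximized, so the comparison is consistent at each $n$ before passing to the limit.

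For the substantive content I would invoke Theorem~\ref{th:asym-cont-Gp}: under $\dim\mathcal{H}_{\mathrm{in}}<\infty$ and $\dim\mathcal{H}_{\mathrm{out}}<\infty$ it asserts that $G^{p,R}$ itself satisfies all of (M), (E), (A1), (C1) and (N'). These are precisely the hypotheses of Theorem~\ref{th:G<G<G-p-2}. Hence I would apply Theorem~\ref{th:G<G<G-p-2} with the admissible metric taken to be $G_{\Phi}(\Delta):=G_{\Phi}^{p,R}(\Delta)$. Its conclusion then reads
\[
G_{\Phi}^{p,\min}(\Delta)\le G_{\Phi}^{p,R}(\Delta)\le\tilde{G}_{\Phi}^{p,\max}(\Delta),
\]
which is exactly the assertion; in particular the right inequality, the harder one, comes for free because Theorem~\ref{th:G<G<G-p-2} already establishes $G\le\tilde{G}^{p,\max}$ for \emph{every} metric satisfying the five axioms.

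Consequently there is no genuine obstacle remaining at the level of the corollary itself: all of the analytic difficulty has been absorbed into the verification of (C1) for $G^{p,R}$ carried out inside Theorem~\ref{th:asym-cont-Gp} (the delicate ingredients there being the Schwarz-inequality estimate with the $1/\sqrt{m}$ tangent scaling and the truncation by the spectral projection $P_{a}$). The only points I would double-check when writing the final version are that the supremum defining $G^{p,R}$ is understood to range over input states allowing an ancilla, so that it matches the ancilla-extended supremum used for $G^{\min}$, and that the finite-dimensionality hypotheses of Theorem~\ref{th:asym-cont-Gp} are explicitly carried over into the statement of the corollary.
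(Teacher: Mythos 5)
Your proposal is correct and matches the intended derivation: the corollary follows by applying Theorem~\ref{th:G<G<G-p-2} with $G:=G^{p,R}$, which is admissible because Theorem~\ref{th:asym-cont-Gp} verifies that $G^{p,R}$ satisfies (M), (E), (A1), (C1) and (N'). Your extra direct argument for the left inequality via $J^{S}\leq J^{R}$ is harmless but redundant, since that inequality already comes out of the same application.
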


\subsection{Quantum states}

In \cite{Matsumoto:rld}, it had been essentially proved
\begin{align*}
G^{\min}  &  =G^{\min,p}=G^{\min,a}=J^{S},\\
G^{\max}  &  =G^{\max,p}=G^{\max,a}=J^{R}.
\end{align*}
It is not difficult to see
\[
J^{R}=G^{p,R}.
\]
Therefore, we have
\[
G^{p,R}=J^{R}\leq\tilde{G}^{\max,p}.
\]
On the other hand, if $\dim\mathcal{H<\infty}$, using \cite{Matsumoto:2010-2},
$J^{R}\geq\tilde{G}^{\max,p}$. Thus,
\[
G^{\max}=G^{\max,p}=G^{\max,a}=\tilde{G}^{\max,p}=J^{R}.
\]
The following theorem is a corollary of Theorem\thinspace\ref{th:asym-cont-Gp}.

\begin{theorem}
$J_{\rho}^{S}$ and $J_{\rho}^{R}$ satisfies (C), i.e.,
\[
\varliminf_{n\rightarrow\infty}\frac{1}{n}\left(  J_{\rho^{\prime n}}%
^{S}\left(  \delta^{\prime n}\right)  -J_{\rho^{\otimes n}}^{S}\left(
\delta^{\left(  n\right)  }\right)  \right)  \geq0,\quad\varliminf
_{n\rightarrow\infty}\frac{1}{n}\left(  J_{\rho^{\prime n}}^{R}\left(
\delta^{\prime n}\right)  -J_{\rho^{\otimes n}}^{R}\left(  \delta^{\left(
n\right)  }\right)  \right)  \geq0
\]
if $\left\Vert \rho^{\prime n}-\rho^{\otimes n}\right\Vert _{1}\rightarrow0$
and $\frac{1}{\sqrt{n}}\left\Vert \delta^{\prime n}-\delta^{\left(  n\right)
}\right\Vert _{1}\rightarrow0$.
\end{theorem}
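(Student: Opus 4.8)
The plan is to read this off as a specialization of Theorem~\ref{th:asym-cont-Gp} to channels with constant output. Recall the convention that a quantum state $\rho$ is identified with the channel $\Phi_{\rho}$ that sends every input to $\rho$; under this identification the preceding subsection has already recorded the reductions $G_{\rho}^{p,\min}(\delta)=J_{\rho}^{S}(\delta)$ and $G_{\rho}^{p,R}(\delta)=J_{\rho}^{R}(\delta)$. Since Theorem~\ref{th:asym-cont-Gp} asserts that both $G^{p,\min}$ and $G^{p,R}$ satisfy (C1), it suffices to verify that the hypotheses and the conclusion of (C1), when transported through the constant-channel identification, become exactly the trace-norm hypotheses and the two $\varliminf$ inequalities in the statement.

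The one genuine computation is the passage from the completely bounded norm in (C1) to the trace norm appearing here. For constant channels the difference $\Phi_{\rho'}-\Phi_{\rho}$ is the map $X\mapsto(\mathrm{tr}\,X)(\rho'-\rho)$; tensoring with an ancilla sends $Y$ to $(\rho'-\rho)\otimes\mathrm{tr}_{1}Y$, so its induced trace-norm gain is $\|\rho'-\rho\|_{1}$. Hence $\|\Phi_{\rho'^{n}}-\Phi_{\rho^{\otimes n}}\|_{\mathrm{cb}}=\|\rho'^{n}-\rho^{\otimes n}\|_{1}$, and the identical argument applied to the tangent (constant) maps gives $\|\Delta^{n}-\Delta^{(n)}\|_{\mathrm{cb}}=\|\delta'^{n}-\delta^{(n)}\|_{1}$. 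I would also check the tensor/tangent bookkeeping: the embedding is multiplicative and linear in the required sense, so $\Phi_{\rho}^{\otimes n}=\Phi_{\rho^{\otimes n}}$ and the tangent combination $\Delta^{(n)}$ corresponds to $\delta^{(n)}$, matching both sides.

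With these identifications, the assumptions $\|\rho'^{n}-\rho^{\otimes n}\|_{1}\to0$ and $\frac{1}{\sqrt{n}}\|\delta'^{n}-\delta^{(n)}\|_{1}\to0$ are precisely the (C1) hypotheses for the associated constant channels, and the (C1) conclusion, after substituting $G^{p,\min}=J^{S}$ and $G^{p,R}=J^{R}$, reads as the two asserted inequalities. I would note in passing that the finite-dimensionality assumed in Theorem~\ref{th:asym-cont-Gp} is in force throughout this subsection as well.

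I expect the only real obstacle to be the norm reduction together with confirming that the constant-channel embedding respects $\otimes$ and the tangent-vector constructions; once these are in place, the result is immediate from Theorem~\ref{th:asym-cont-Gp}, the remainder being a direct substitution of the identities $G^{p,\min}=J^{S}$ and $G^{p,R}=J^{R}$ established just above.
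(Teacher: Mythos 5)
Your proposal matches the paper's own treatment: the paper states this result as a corollary of Theorem~\ref{th:asym-cont-Gp} under the identification of states with constant-output channels, exactly as you do, and your reduction of the cb-norm to the trace norm for constant channels together with the substitutions $G^{p,\min}=J^{S}$, $G^{p,R}=J^{R}$ correctly fills in the (unstated) details of that corollary. No discrepancy to report.
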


\subsection{Classical channel}

By \cite{Matsumoto:2010-2}, we have the following :

\begin{theorem}
\label{th:finite-sim-prob}Suppose $p$ is a probability distribution and
$\delta$ is a signed measure over set with $k$-elements ($k<\infty$). Let
$J:=J_{p}\left(  \delta\right)  $, $\varepsilon>0$ and
\[
\left\{  q^{n},\delta^{\prime n}\right\}  :=\left\{  \mathrm{N}\left(
0,1\right)  ,\sqrt{n\left(  J+\varepsilon\right)  }\delta\mathrm{N}\left(
0,1\right)  \right\}  \equiv\left\{  \mathrm{N}\left(  0,1\right)
,\delta\mathrm{N}\left(  0,1\right)  \right\}  ^{\otimes n\left(
J+\varepsilon\right)  }.
\]
Then, we can compose an asymptotic parallel tangent simulation of $\left\{
p^{\otimes n},\delta^{\left(  n\right)  }\right\}  $ using $\left\{
q^{n},\delta^{\prime n}\right\}  $.
\end{theorem}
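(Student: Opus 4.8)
The plan is to read this as the classical \emph{local asymptotic normality} (LAN) theorem of Le Cam, restricted to a finite sample space, and to realize the required tangent simulation by the explicit randomization of \cite{Matsumoto:2010-2}. Since $p$ has constant output, the ``$\mathbf{I}\otimes$'' in (\ref{asym-simulation-tangent-1}) and (\ref{asym-simulation-tangent-2}) is inert, so it suffices to produce a Markov kernel (classical channel) $\Lambda^{n}$ with $\|p^{\otimes n}-\Lambda^{n}(q^{n})\|_{1}\to0$ and $\frac{1}{\sqrt{n}}\|\delta^{(n)}-\Lambda^{n}(\delta^{\prime n})\|_{1}\to0$. First I would record the two Fisher informations in play: the target $\{p^{\otimes n},\delta^{(n)}\}$ carries $J_{p^{\otimes n}}(\delta^{(n)})=nJ$ by additivity, while the source, being by sufficiency equivalent to $n(J+\varepsilon)$ i.i.d.\ unit Gaussian shifts, carries $n(J+\varepsilon)$. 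The strict inequality $n(J+\varepsilon)>nJ$ is the slack the construction will consume.

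Next I would set up the Gaussian approximation underlying LAN. Under $p^{\otimes n}$ the score statistic $T_{n}:=\sum_{j=1}^{n}\delta_{X_{j}}/p_{X_{j}}$ is a sum of i.i.d.\ centered variables of variance $J$, so by the central limit theorem $T_{n}/\sqrt{n}\Rightarrow\mathrm{N}(0,J)$; moreover $T_{n}$ is asymptotically the locally sufficient statistic for the one-parameter family $\{(p+\tfrac{s}{\sqrt{n}}\delta)^{\otimes n}\}_{s}$. This is exactly the LAN property: the localized i.i.d.\ experiment converges in Le Cam's $\Delta$-distance to the Gaussian shift experiment of Fisher information $nJ$, hence its deficiency from that shift tends to $0$. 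Consequently there is a randomization turning a Gaussian observation of information $nJ$ into a law approximating $p^{\otimes n}$ in total variation while reproducing the shift direction $\delta^{(n)}$ to first order.

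To assemble $\Lambda^{n}$ I would compose two maps. Because $n(J+\varepsilon)>nJ$, I first degrade the source Gaussian to information exactly $nJ$ by convolving with an independent Gaussian of the complementary variance --- an elementary, exact randomization --- and then apply the LAN kernel of the previous step. The $\varepsilon$-slack plays two roles: it furnishes the noise budget for the degrading step, and it absorbs the $o(1)$ deficiency error of the LAN convergence, so that the base laws match to $o(1)$ in $\|\cdot\|_{1}$ as in (\ref{asym-simulation-tangent-1}) and the tangent vectors match to $o(\sqrt{n})$ in $\|\cdot\|_{1}$ as in (\ref{asym-simulation-tangent-2}). Finiteness of the alphabet ($k<\infty$) is used throughout to keep the limiting Gaussian finite-dimensional, so that the kernel can be built concretely, for instance by binning the Gaussian axis into multinomial counts whose mean reproduces $p^{\otimes n}$ and whose derivative reproduces $\delta^{(n)}$.

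The hard part will be the last point: exhibiting a \emph{single} kernel $\Lambda^{n}$ that achieves the two matchings \emph{simultaneously} at their two different scales --- the base laws to $o(1)$ and the tangent vectors only to $o(\sqrt{n})$, the latter tolerance reflecting that $\|\delta^{(n)}\|_{1}$ itself grows like $\sqrt{n}$. Controlling the first-order (tangent) error uniformly while driving the zeroth-order (base) error to zero is exactly where Le Cam's randomization criterion, together with the quantitative LAN estimates of \cite{Matsumoto:2010-2}, do the real work; the remaining steps are bookkeeping with the $\varepsilon$ budget.
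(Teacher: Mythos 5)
First, a point of comparison: the paper itself gives no proof of Theorem~\ref{th:finite-sim-prob} --- it is imported verbatim by citation from \cite{Matsumoto:2010-2} --- so there is no in-text argument to measure your proposal against. Your overall plan (read the statement as classical local asymptotic normality for a finite alphabet, degrade the source Gaussian from information $n(J+\varepsilon)$ to $nJ$ by an exact convolution, then invoke a Le Cam randomization from the Gaussian shift to the localized multinomial experiment) is the natural one, and the degradation step is correct: convolving $\mathrm{N}\left(t\sqrt{n(J+\varepsilon)},1\right)$ with $\mathrm{N}\left(0,\varepsilon/J\right)$ yields Fisher information exactly $nJ$. You also correctly observe that the $\mathbf{I}\otimes$ and the cb-norm are inert here because the target is a state family.

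The genuine gap is at the step you yourself flag as ``the hard part,'' and it is not merely bookkeeping. Two problems. (i) You discharge the simultaneous matching of base law and tangent vector to ``the quantitative LAN estimates of \cite{Matsumoto:2010-2}'' --- but that reference is precisely where this theorem lives, so the argument is circular: the content you are asked to supply is exactly the construction of the kernel. (ii) The bridge you assert from deficiency convergence to the tangent conditions does not hold as stated. Le Cam's randomization criterion gives a kernel $\Lambda^{n}$ with $\sup_{s}\left\Vert P^{n}_{s}-\Lambda^{n}\left(Q^{n}_{s}\right)\right\Vert _{1}\rightarrow0$ over the localized parameter, but conditions (\ref{asym-simulation-tangent-1})--(\ref{asym-simulation-tangent-2}) require control of $\left\Vert \delta^{\left(n\right)}-\Lambda^{n}\left(\delta^{\prime n}\right)\right\Vert _{1}$, i.e.\ of the \emph{derivative} of $s\mapsto P^{n}_{s}-\Lambda^{n}\left(Q^{n}_{s}\right)$ at $s=0$; uniform smallness of a family of functions says nothing about smallness of its derivative, so deficiency convergence alone cannot deliver (\ref{asym-simulation-tangent-2}). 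To close this one needs an \emph{explicit} kernel --- e.g.\ a quantile or binning coupling of the (locally sufficient) score or type statistic with the Gaussian observation --- for which the image of the signed measure $\delta^{\prime n}$ can be computed and compared to $\delta^{\left(n\right)}$ with a quantitative $o\left(\sqrt{n}\right)$ error, which is where the $\varepsilon$ slack is actually spent. As it stands the proposal is a correct roadmap with the central estimate missing.
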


\begin{theorem}
\label{th:channel-sim-finite}$\Phi\left(  \cdot|x\right)  $ is a probability
distribution and $\Delta\left(  \cdot|x\right)  $ is a signed measure over set
with $k$-elements ($k<\infty$). Let us define $\left\{  q_{\varepsilon}%
^{n},\delta_{\varepsilon}^{n}\right\}  :=$ $\{\mathrm{N}\left(  0,1\right)
,\delta\mathrm{N}\left(  0,1\right)  \}^{\otimes n\left(  1+k\varepsilon
\right)  J}$ where $J=G_{\Phi}^{\min}\left(  \Delta\right)  =\max_{1\leq x\leq
k}J_{\Phi\left(  \cdot|x\right)  }\left(  \Delta\left(  \cdot|x\right)
\right)  $ and $\varepsilon$ is arbitrary positive number. Then, there is
$\Lambda^{n}$ such that
\begin{align*}
\left\Vert \Phi^{\otimes n}\left(  p\right)  -\Lambda^{n}\left(  p\otimes
q_{\varepsilon}^{n}\right)  \right\Vert _{\mathrm{cb}}  &  \leq\frac{k}%
{\sqrt{\varepsilon n}}\max_{x}C\left(  \left\{  \Phi\left(  \cdot|x\right)
,\Delta\left(  \cdot|x\right)  \right\}  \right)  ,\\
\frac{1}{\sqrt{n}}\left\Vert \Delta^{\left(  n\right)  }\left(  p\right)
-\Lambda^{n}\left(  p\otimes\delta_{\varepsilon}^{n}\right)  \right\Vert
_{\mathrm{cb}}  &  \leq\frac{k}{\sqrt{\varepsilon n}}\max_{x}C\left(  \left\{
\Phi\left(  \cdot|x\right)  ,\Delta\left(  \cdot|x\right)  \right\}  \right)
.
\end{align*}

\end{theorem}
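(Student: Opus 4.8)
The plan is to reduce the channel statement to the single-distribution result Theorem~\ref{th:finite-sim-prob}, applied separately to each of the $k$ input letters, and then to reassemble the pieces into one fixed map $\Lambda^{n}$. The starting observation is that a classical channel on $k$ letters is nothing but the list of output laws $\Phi(\cdot|1),\dots,\Phi(\cdot|k)$ together with the tangents $\Delta(\cdot|1),\dots,\Delta(\cdot|k)$, all governed by the \emph{single} scalar parameter $\theta$. Consequently the family $\{\Phi_{\theta}^{\otimes n}\}$ fed with a classical input word $x^{n}=(x_{1},\dots,x_{n})$ is, position by position, a product of the single-letter families, and its Fisher information about $\theta$ is $\sum_{x}n_{x}J_{\Phi(\cdot|x)}(\Delta(\cdot|x))\le nJ$, where $n_{x}$ is the number of occurrences of letter $x$ and the bound uses $J=\max_{x}J_{\Phi(\cdot|x)}(\Delta(\cdot|x))$.

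First I would construct $\Lambda^{n}$. Because the input is classical, $\Lambda^{n}$ may copy the input word $x^{n}$, read off the type (the counts $n_{x}$ and the positions carrying each letter), and split the i.i.d.\ pool $\{\mathrm N(0,1),\delta\mathrm N(0,1)\}^{\otimes n(1+k\varepsilon)J}$ into $k$ sub-pools, one per letter. The crucial point is that splitting i.i.d.\ copies of a shift family preserves the \emph{common} shift, so each sub-pool is again a genuine Gaussian shift resource carrying the \emph{same} underlying $\theta$. To the sub-pool assigned to letter $x$ I then apply the simulation furnished by Theorem~\ref{th:finite-sim-prob} for the single model $\{\Phi(\cdot|x)^{\otimes n_{x}},\Delta(\cdot|x)^{(n_{x})}\}$, and route its $n_{x}$ outputs back to the positions that carry $x$. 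Sharing one $\theta$ across all sub-pools is exactly what forces the simulated tangent to add up, across positions, to $\Delta^{(n)}$ rather than to $k$ inconsistent single-letter tangents.

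The budget is controlled by the inequality $\sum_{x}n_{x}J_{\Phi(\cdot|x)}(\Delta(\cdot|x))\le nJ$. Theorem~\ref{th:finite-sim-prob} asks, for the $n_{x}$-fold letter-$x$ model, for a pool of Fisher information $n_{x}\bigl(J_{\Phi(\cdot|x)}(\Delta(\cdot|x))+\eta\bigr)$; taking the uniform overhead $\eta=k\varepsilon J$ the totals obey $\sum_{x}n_{x}\bigl(J_{\Phi(\cdot|x)}(\Delta(\cdot|x))+\eta\bigr)\le nJ+k\varepsilon J\,n=n(1+k\varepsilon)J$ for \emph{every} input word, so the single fixed pool of size $n(1+k\varepsilon)J$ is always sufficient. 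For the error I would invoke the quantitative form of Theorem~\ref{th:finite-sim-prob} behind \cite{Matsumoto:2010-2}, which controls each single-letter simulation in cb norm by a term of the shape $\tfrac{1}{\sqrt{\varepsilon n}}C(\{\Phi(\cdot|x),\Delta(\cdot|x)\})$; summing the $k$ letter contributions by subadditivity of the cb norm over the position-blocks, and using that for a classical input the cb norm reduces to the worst case over input words, yields both displayed bounds with the common constant $\tfrac{k}{\sqrt{\varepsilon n}}\max_{x}C(\{\Phi(\cdot|x),\Delta(\cdot|x)\})$.

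The hard part will be the two places where the single shared parameter $\theta$ must be reconciled with the per-letter decomposition. First, one must check that the input-dependent partition of the pool is implemented by a genuinely fixed CPTP map and does not destroy complete positivity or the coherence of $\theta$ across blocks; this is where the classicality of the input, which permits copying and conditional branching, is essential. Second, and more delicate, is showing that the \emph{global} error carries $n$, not $n_{x}$, in the denominator: a naive block-by-block trace-norm estimate degrades for rarely used letters, so the argument must exploit the shared-$\theta$ (local asymptotic normality) structure of the whole $n$-use experiment, so that the effective sample size entering each letter's error is the full $n$ supplied through the overhead $\eta=k\varepsilon J$ rather than the occurrence count $n_{x}$. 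Once the quantitative local asymptotic normality estimate of \cite{Matsumoto:2010-2} is imported in this uniform form, the remaining steps are routine bookkeeping.
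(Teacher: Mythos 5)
There is a genuine gap, and you have in fact put your finger on it yourself without repairing it: the per-letter resource allocation you choose cannot deliver the claimed error bound for rarely occurring letters. You give the sub-pool for letter $x$ a size $n_{x}\bigl(J_{\Phi(\cdot|x)}(\Delta(\cdot|x))+k\varepsilon J\bigr)$, which for a word in which $x$ appears $O(1)$ times is a pool of \emph{constant} size. Theorem\thinspace\ref{th:finite-sim-prob} applied to that sub-pool then yields a simulation of $\{\Phi(\cdot|x)^{\otimes n_{x}},\Delta(\cdot|x)^{(n_{x})}\}$ whose error is of order one, not $O(1/\sqrt{\varepsilon n})$, and since the cb norm for a classical input is a supremum over input words, a single such word already violates both displayed inequalities. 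Your proposed remedy --- invoking the ``shared-$\theta$'' / local asymptotic normality structure ``in uniform form'' --- does not address this, because in your construction the sub-pools are disjoint and the only $\theta$-bearing resource routed to the positions carrying a rare letter is that constant-size sub-pool; no reorganization of the analysis can extract a vanishing simulation error from a non-growing resource. (The shared-$\theta$ consistency of the tangents across blocks, which you treat as the delicate point, is actually automatic: $\Delta^{(n)}$ and the Gaussian tangent $\delta\mathrm{N}(0,1)^{(m)}$ both decompose as sums of single-position perturbations, so disjoint blocks recombine correctly by construction.)

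The missing idea is the padding-and-discarding device that the paper uses explicitly in the immediately following CQ-channel argument (the theorem itself is only cited from \cite{Matsumoto:2010-2}, but that subsection is the paper's template): when $n_{x}<\varepsilon n$, do \emph{not} simulate $n_{x}$ copies from a pool proportional to $n_{x}$; instead allocate $\varepsilon n J$ copies of $\{\mathrm{N}(0,1),\delta\mathrm{N}(0,1)\}$ to letter $x$, simulate the \emph{inflated} model $\{\Phi(\cdot|x)^{\otimes \varepsilon n},\Delta(\cdot|x)^{(\varepsilon n)}\}$, and marginalize away the $\varepsilon n-n_{x}$ unneeded output positions. Every letter is then simulated with an effective sample size of at least $\varepsilon n$, so each of the $k$ blocks contributes an error $\leq\frac{1}{\sqrt{\varepsilon n}}\max_{x}C\bigl(\{\Phi(\cdot|x),\Delta(\cdot|x)\}\bigr)$, and the triangle inequality over the $k$ blocks gives the stated $\frac{k}{\sqrt{\varepsilon n}}$ bound. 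The total budget is $\sum_{x}\max(n_{x},\varepsilon n)\,J\leq\sum_{x}(n_{x}+\varepsilon n)J=n(1+k\varepsilon)J$ --- numerically the same total as yours, which is why your bookkeeping looked consistent, but distributed so that no letter is starved. The rest of your outline (classical input permits copying and branching on the word, per-letter reduction to Theorem\thinspace\ref{th:finite-sim-prob}, subadditivity over blocks) is sound once the allocation is corrected.
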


\subsection{CQ channel}

Let
\[
\Phi\left(  x\right)  =\rho_{x},\,\,\Delta\left(  x\right)  =\delta_{x}.
\]
Then, trivially%

\[
G_{\Phi}^{\min,p}\left(  \Delta\right)  =\max_{x}J_{\rho_{x}}^{S}\left(
\delta_{x}\right)  .
\]
Also,%

\[
\tilde{G}_{\Phi}^{\max,p}\left(  \Delta\right)  \geq G_{\Phi}^{R,p}\left(
\Delta\right)  =\max_{x}J_{\rho_{x}}^{R}\left(  \delta_{x}\right)  .
\]

In the sequel, we prove $\tilde{G}_{\Phi}^{\max,p}\left(  \Delta\right)  \leq
G_{\Phi}^{R,p}\left(  \Delta\right)  =\max_{x}J_{\rho_{x}}^{R}\left(
\delta_{x}\right)  $, if $\dim\mathcal{H}<\infty$. \ Let $\left\{
q_{x},\delta_{x}^{\prime}\right\}  $ be the optimal tangent simulation of
$\left\{  \rho_{x},\delta_{x}\right\}  $, or satisfy
\[
\rho_{x}=\sum_{y}q_{x}\left(  y\right)  \rho_{x,y}\,,\,\,\,\delta_{x}=\sum
_{y}\delta_{x}^{\prime}\left(  y\right)  \rho_{x,y}\,,
\]
and $J_{q_{x}}\left(  \delta_{x}^{\prime}\right)  =J_{\rho_{x}}^{R}\left(
\delta_{x}\right)  $ (see \cite{Matsumoto:rld}).

Denote $\max_{x}J_{\rho_{x}}^{R}\left(  \delta_{x}\right)  $ by $J$. Given a
input sequence $x^{n}=x_{1}x_{2}\cdots x_{n}$, denote by $n_{x}$ the tmes of
$x_{i}=x$ in the sequence $x^{n}$. Suppose $n_{x}\geq\varepsilon n$. Then, we
use $\{\mathrm{N}\left(  0,1\right)  ,\delta\mathrm{N}\left(  0,1\right)
\}^{\otimes n_{x}J}$ for simulation of $\left\{  \rho_{x}^{\otimes n_{x}%
},\delta_{x}^{\left(  n_{x}\right)  }\right\}  $. On the other hand, if
$n_{x}<\varepsilon n$, we first fabricate $\left\{  \rho_{x}^{\otimes
\varepsilon n},\delta_{x}^{\left(  \varepsilon n\right)  }\right\}  $
consuming $\{\mathrm{N}\left(  0,1\right)  ,\delta\mathrm{N}\left(
0,1\right)  \}^{\otimes n_{x}J}$, and takes partial trace. In both case, by
Theorem\thinspace\ref{th:finite-sim-prob}, the error of simulation vanishes as
$n\rightarrow\infty$. We do this for all $x=1,\cdots,k$. As a whole, we used
$\left\{  \mathrm{N}\left(  0,1\right)  ,\delta\mathrm{N}\left(  0,1\right)
\right\}  ^{\otimes\left(  n+\varepsilon kn\right)  J}$ to simulate
$\bigotimes_{\kappa=1}^{n}\left\{  \rho_{x_{\kappa}}^{\otimes n_{x_{\kappa}}%
},\delta_{x_{\kappa}}^{\left(  n_{x_{\kappa}}\right)  }\right\}  $. \ Since
$\varepsilon>0$ is arbitrary, we have
\[
\tilde{G}_{\Phi}^{\max,p}\left(  \Delta\right)  =G_{\Phi}^{R,p}\left(
\Delta\right)  =\max_{x}J_{\rho_{x}}^{R}\left(  \delta_{x}\right)  .
\]

\begin{conjecture}
For a QC channel, $G_{\Phi}^{R,p}\left(  \Delta\right)  =\tilde{G}_{\Phi
}^{\max,p}\left(  \Delta\right)  \lneqq G_{\Phi}^{\max,p}\left(
\Delta\right)  $.
\end{conjecture}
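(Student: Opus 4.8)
The statement splits into an equality, $G_\Phi^{R,p}(\Delta)=\tilde G_\Phi^{\max,p}(\Delta)$, and a strict inequality, $\tilde G_\Phi^{\max,p}(\Delta)\lneqq G_\Phi^{\max,p}(\Delta)$, and the plan is to attack them separately. In both, the two \emph{easy} directions are already at hand: the bound $G_\Phi^{R,p}(\Delta)\le\tilde G_\Phi^{\max,p}(\Delta)$ is precisely the Corollary following Theorem~\ref{th:asym-cont-Gp}, while $\tilde G_\Phi^{\max,p}(\Delta)\le G_\Phi^{\max,p}(\Delta)$ should follow because any exact finite tangent simulation attaining $G_\Phi^{\max,p}$ can be replaced, at asymptotically the same Fisher cost, by a Gaussian shift through Theorem~\ref{th:finite-sim-prob}. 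Hence the real work is to prove $\tilde G_\Phi^{\max,p}(\Delta)\le G_\Phi^{R,p}(\Delta)$ and to prove that the second inequality is genuinely strict.

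For the equality I would exploit the defining feature of a QC channel, namely that $\Phi^{\otimes n}(\rho)$ is diagonal in the output basis for every input $\rho$. Consequently the right logarithmic derivative collapses to the classical Fisher information, $J^R_{\Phi^{\otimes n}(\rho)}(\Delta^{(n)}(\rho))=J_{\Phi^{\otimes n}(\rho)}(\Delta^{(n)}(\rho))$, so that $G_\Phi^{R,p}(\Delta)$ is nothing but the asymptotic per-copy classical Fisher information of the output optimized over inputs. The idea is then to realize this value by a Gaussian-shift simulation: factor $\Phi_\theta$ as a fixed, $\theta$-independent fine POVM $\{F_z\}$ followed by a $\theta$-dependent classical post-processing $c_\theta(\,\cdot\,|z)$, absorb the measurement into $\Lambda^n$ exactly, and simulate the residual classical family $c_\theta^{\otimes n}$ by a Gaussian shift via Theorem~\ref{th:channel-sim-finite}. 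Adapting $\{F_z\}$ and the relevant input to a near-optimal $\rho_n$ for $G_\Phi^{R,p}$, the distribution fed to the post-processing should carry classical Fisher information within $\varepsilon$ of $n\,G_\Phi^{R,p}(\Delta)$, and letting $\varepsilon\downarrow0$ would give the claim.

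The genuine obstacle in this half is that, unlike the CQ case treated just above, the input to a QC channel is quantum and cannot be branched on: the single map $\Lambda^n$ together with a fixed prior must reproduce $\Phi^{\otimes n}$ in cb-norm for \emph{all} inputs simultaneously, whereas the cost $G_\Phi^{R,p}$ is dictated only by the optimal input. A naive fixed factorization charges the worst-case classical Fisher information $\max_z J_{c_\theta(\cdot|z)}(\dot c(\cdot|z))$, which may exceed $G_\Phi^{R,p}$ because $\mathrm{tr}(F_z\rho)$ cannot in general be concentrated on a single outcome $z$. Closing this gap is where I expect most of the effort to go, and I would try to do it by a quantum local-asymptotic-normality argument that approximates the optimal (possibly entangled) input's output statistics by a Gaussian shift directly, rather than through an outcome-wise factorization.

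For the strict inequality I would argue on an explicit family rather than in general, since $G_\Phi^{\max,p}$ is governed by \emph{exact} simulation valid on all inputs, a strictly stronger demand than cb-approximation. I would take a one-parameter family of qubit POVMs whose effects $E_y(\theta)$ rotate noncommutatively, compute $G_\Phi^{R,p}(\Delta)$ from the optimal input's classical Fisher information, and lower-bound $G_\Phi^{\max,p}(\Delta)$ using the characterization $G^{\max}=\inf J^R_\sigma(\delta)$ over quantum tangent simulations together with axiom (M); exhibiting one family with $G_\Phi^{R,p}<G_\Phi^{\max,p}$ suffices. The hard part is the lower bound on $G_\Phi^{\max,p}$, since one must rule out \emph{every} exact quantum tangent simulation $\Phi=\Lambda\circ(\mathbf I\otimes\sigma)$. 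I would attempt this by feeding a well-chosen input through the simulation identity and applying the RLD data-processing inequality to transport a nonclassical lower bound from the prior $\sigma$ onto the reproduced effects, forcing $J^R_\sigma(\delta)$ strictly above the classical content $G_\Phi^{R,p}$.
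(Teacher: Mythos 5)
First, a point of order: the statement you are proving is labelled a \emph{Conjecture} in the paper. The author gives no proof of it, so there is nothing to compare your argument against; the only guidance the paper offers is the CQ-channel computation immediately preceding it, where $\tilde G^{\max,p}_\Phi(\Delta)=G^{R,p}_\Phi(\Delta)=\max_x J^R_{\rho_x}(\delta_x)$ is obtained by branching on the classical input letter $x$ and running an independent Gaussian-shift simulation for each letter. Your proposal has the right skeleton (two easy inequalities plus two hard steps), but both hard steps are left open in your own account, so what you have written is not a proof of the conjecture; it is essentially an articulate restatement of why it is still a conjecture.

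Concretely, the gaps are these. (i) For the equality $\tilde G^{\max,p}=G^{R,p}$, the obstruction you name in your third paragraph \emph{is} the entire problem: the CQ argument works only because the simulator sees the input letter and can route each letter to its own Gaussian resource, whereas for a QC channel a single $\Lambda^n$ with a single prior must reproduce $\Phi^{\otimes n}$ in cb-norm on \emph{all} inputs simultaneously (condition (\ref{asym-simulation-tangent-1})). A $\theta$-independent POVM followed by $\theta$-dependent classical post-processing need not exist, and when it does it charges $\max_z J_{c_\theta(\cdot|z)}(\dot c_\theta(\cdot|z))$ rather than the optimal-input value defining $G^{R,p}$; invoking local asymptotic normality for the optimal input does not help, because $\tilde G^{\max,p}$ is defined by uniform simulation of the channel, not of one output family. (ii) For strictness, you supply no mechanism for lower-bounding $G^{\max,p}$ against \emph{every} exact tangent simulation at every block length. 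The RLD data-processing route you sketch only yields $J^R_\sigma(\delta)\ge\sup_\rho J^R_{\Phi(\rho)}(\Delta(\rho))$, and for a QC channel the right-hand side is precisely the classical quantity $G^{R}$; so monotonicity can never do better than the non-strict inequality $G^{\max,p}\ge G^{R,p}$, and strictness requires a genuinely different obstruction (note also the paper's generalized-Pauli example, where $G^{\max}=G^{\min}$ and all gaps close, as a warning that the candidate family must be chosen to defeat exact randomization). (iii) Even the step you call easy, $\tilde G^{\max,p}\le G^{\max,p}$, is not ``precisely'' Theorem \ref{th:finite-sim-prob}: the near-optimal prior $q^n$ achieving $G^{\max}_{\Phi^{\otimes n}}(\Delta^{(n)})$ need not be i.i.d.\ over a fixed finite alphabet, so the i.i.d.\ Gaussian-simulation theorem does not apply to it as stated, and the required generalization has to be imported from \cite{Matsumoto:2010-2} or proved.
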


\end{document}